\documentclass[twocolumn,superscriptaddress,aps]{revtex4-1}
\usepackage{graphicx}
\usepackage{amsmath}
\usepackage{amsthm}
\usepackage{amssymb}
\usepackage{latexsym}
\usepackage{array}
\usepackage{hyperref}
\usepackage{float}
\usepackage{amsfonts}
\usepackage{dsfont}
\usepackage{mathrsfs}
\usepackage{verbatim}
\usepackage{bbold}
\usepackage{lipsum}
\usepackage[normalem]{ulem}

\usepackage{upgreek}

\usepackage{makecell}
\usepackage{adjustbox,lipsum}

\usepackage{algorithm}
\usepackage{algpseudocode}

\usepackage{color}

\newcommand{\bra}[1]{\left<#1\right|}
\newcommand{\ket}[1]{\left|#1\right>}
\newcommand{\abs}[1]{\left|#1\right|}
\newcommand{\norm}[1]{\left\lVert#1\right\rVert}

\newcommand{\braket}[2]{\left<{#1}|{#2}\right>}
\newcommand{\ketbra}[2]{\ket{#1}\!\!\bra{#2}}

\newtheorem{theorem}{Theorem}
\newtheorem{proposition}{Proposition}
\newtheorem{lemma}{Lemma}

\newtheorem{definition}{Definition}

\newtheorem{assumption}{Assumption}
\newtheorem{remark}{Remark}

\DeclareMathOperator{\Tr}{Tr}

\begin{document}

\title{Run-length certificates in quantum learning: sample complexity and noise thresholds}

\author{Jeongho~Bang}\email{jbang@yonsei.ac.kr}
\affiliation{Institute for Convergence Research and Education in Advanced Technology, Yonsei University, Seoul 03722, Republic of Korea}
\affiliation{Department of Quantum Information, Yonsei University, Incheon 21983, Republic of Korea}

\date{\today}% It is always \today, today, but any date may be explicitly specified

\begin{abstract}
Quantum learning from state samples is often benchmarked in a fixed-budget paradigm, relating error to a prescribed number of copies. We instead adopt a stopping-time viewpoint: in minimal-feedback learning, the learning completion can be defined by an online run-length certificate extracted from a one-bit-per-copy record. As an instantiation, we analyze single-shot measurement learning (SSML), introduced in [Phys.~Rev.~A {\bf 98},~052302~(2018)] and [Phys.~Rev.~Lett.~{\bf 126},~170504~(2021)], which tunes a unitary and halts after $M_H$ consecutive successes. Viewing the halting as a sequential certification linking the observed counter to infidelity, we derive sample-complexity bounds that separate \emph{search} (driving success probability toward unity) from \emph{certification} (run statistics of consecutive successes). The resulting trade-off among $M_H$, dimension $d$, and one-bit reliability clarifies when performance is control-limited versus certificate-limited. With label-flip noise probability $q$, we find a sharp feasibility threshold: once $qM_H \gtrsim 1$, the expected halting time grows exponentially, making the learning completion impractical even under ideal control. More broadly, this shows that under severely constrained feedback, the certification can dominate sample complexity and small label noise becomes the information bottleneck. Finally, the near-optimal accuracy enabled by run-length certification aligns with the quantum-state-estimation (and equivalently, no-cloning) limits, expressed in the stopping-time terms.
\end{abstract}

\maketitle

%-------------------------------------------------------------------------------------------------------------------------------------------------------------------------------------------------------------------------------------
\section{Introduction}\label{sec:intro}
%-------------------------------------------------------------------------------------------------------------------------------------------------------------------------------------------------------------------------------------

State learning from finitely many copies is a canonical arena where the operational constraints of laboratory measurements meet fundamental limits of quantum information~\cite{Holevo2011,Zhao2024,Anshu2024,Bang2026}. In the standard ``fixed-budget'' paradigm, one specifies an experimental sample budget $N$ in advance and asks how the estimation error scales with $N$~\cite{James2001,Teo2015}. Yet, many modern platforms operate naturally in a closed-loop mode: the controller adapts measurements in real time, updates the control based on minimal classical feedback, and decides on the fly when the acquired evidence is sufficient to stop. In such settings, the central resource is no longer a predetermined $N$ but the random number of copies consumed until a data-dependent termination event occurs.

Single-shot measurement learning (SSML) is an especially transparent representative of this adaptive paradigm~\cite{Lee2018, Lee2021}. Introduced as a minimal-feedback learning algorithm for unknown pure states, SSML repeatedly (i) applies a tunable unitary control to each fresh copy of the unknown state and (ii) performs a binary test against a fixed fiducial state, producing only a single ``success or failure'' bit per copy. The controller updates the unitary by a simple random-perturbation rule driven solely by this one-bit record and a tiny internal memory (a counter of consecutive successes). SSML thus isolates an extreme regime of information acquisition: \emph{one quantum copy $\to$ one classical bit $\to$ one causal update}.

What makes SSML conceptually distinctive is that it defines ``learning completion'' intrinsically through a halting rule. Rather than terminating after a fixed number of copies, the algorithm stops once it observes $M_H$ consecutive recorded successes. Because the SSML control is frozen during success events, the terminal run of $M_H$ successes is generated under a fixed control setting. This imbues the halting event with a concrete statistical meaning: it acts as an online sequential certificate that the success probability---and hence the state fidelity to the target---is close to unity. In other words, SSML intertwines the learning and certification in a single experimentally implementable loop, and the sample usage itself becomes an object of inference.

Despite extensive numerical and experimental studies, SSML is most often discussed in the language of fixed-budget accuracy (infidelity versus $N$), while the halting rule is treated largely heuristically. The guiding question of this work is therefore not only ``how accurate is SSML after $N$ shots?'' but, more operationally, how many copies are required until SSML halts, as a function of the certificate threshold $M_H$, the Hilbert-space dimension $d$, and the reliability of the one-bit feedback record? Addressing this question demands a stopping-time viewpoint: the halting time is a random variable whose expectation, tail behavior, and high-confidence quantiles constitute the natural notion of sample complexity.

Our analysis proceeds by reformulating SSML as a general stopping-time problem and by separating two qualitatively different sources of sample cost. First, there is a search cost: the dimension-dependent effort needed to steer the control into a region where the success probability is sufficiently close to one. Second, there is a certification cost: once in such a region, the additional overhead of accumulating $M_H$ consecutive successes, which is governed by universal run statistics and is largely independent of $d$. This decomposition clarifies which aspects of SSML performance are algorithmic (control landscape and update design) and which are statistical (evidence accumulation under a one-bit record).

A second theme of this paper is robustness. Because SSML relies on each single-bit label to update and to certify completion, the classification noise---flips of the recorded success/failure label~\cite{Angluin1988,Cover1999}---can have an outsize impact on the stopping time even when perfect control is, in principle, available. We show that the halting rule induces a sharp operational feasibility condition: in the regime where the expected number of corrupted labels within a would-be certificate run is $O(1)$, i.e., $qM_H \gtrsim 1$ for flip probability $q$, the expected halting time necessarily exhibits an exponential blow-up. This phenomenon is certification-limited and persists even for an idealized ``oracle'' search stage~\cite{Drekic2021}.

Finally, we interpret these stopping-time scaling laws through the lens of quantum-information principles. The near-optimal $O(N^{-1})$ SSML accuracy reported in Ref.~\cite{Lee2021} is consistent with the information limits of (pure) state learning and, equivalently, with no-cloning constraints. Our contribution is to show how such limits manifest in a sequential language: the halting condition provides an operational bridge from a one-bit-per-copy record to a quantitative certificate, while the noise threshold illustrates how physical imperfections in the classical feedback channel translate into an information-theoretic bottleneck for adaptive learning.

%-------------------------------------------------------------------------------------------------------------------------------------------------------------------------------------------------------------------------------------
\section{SSML algorithm and notation}\label{sec:ssml_alg}
%-------------------------------------------------------------------------------------------------------------------------------------------------------------------------------------------------------------------------------------

First, we specify the single-shot measurement learning (SSML) and fix the notation used throughout the paper. Our presentation is deliberately operational: SSML is a closed-loop procedure that consumes one copy of an unknown state per step, produces a single binary outcome, and updates a controllable unitary in response. The algorithm terminates when a run-length halting condition is satisfied, and this termination event is itself the natural notion of ``learning completion.''

%---------------------------------------------------------------------------------------------------------------------------------------------------------------------------------
\subsection{Physical setting and learning objective}\label{subsec:ssml_setting}

Let $\mathcal{H} \simeq \mathbb{C}^{d}$ be a finite-dimensional Hilbert space.  A preparation device $\mathsf{P}$ repeatedly emits an unknown pure state
\begin{eqnarray}
\hat{\rho}_{\tau} := \ketbra{\psi_{\tau}}{\psi_{\tau}}, \quad \ket{\psi_{\tau}}\in\mathcal{H},
\label{eq:ssml_unknown_state}
\end{eqnarray}
and the only promised side information is the Hilbert-space dimension $d$. In SSML one does not attempt to ``open'' the device $\mathsf{P}$; instead, one infers and reproduces $\ket{\psi_{\tau}}$ solely through sequential interactions with its outputs.

The learner controls a family of unitary operations
\begin{eqnarray}
\hat{U}(\mathbf{p})\in\mathsf{U}(d), \quad \mathbf{p}\in\mathcal{P}\subset\mathbb{R}^{m},
\label{eq:ssml_unitary_family}
\end{eqnarray}
where $\mathbf{p}$ is a classical control vector updated during learning. A fiducial (reference) state $\ket{f} \in \mathcal{H}$ is fixed in advance and chosen to be experimentally convenient (e.g., the most reliably preparable and detectable state in a given platform). The learning objective is to find a parameter $\mathbf{p}$ such that $\hat{U}(\mathbf{p})$ maps the unknown state to the fiducial state $\ket{f}$ up to a global phase,
\begin{eqnarray}
\hat{U}(\mathbf{p}_{\star})\ket{\psi_{\tau}} = e^{i\theta_{\star}}\ket{f},
\label{eq:ssml_objective}
\end{eqnarray}
so that the unknown input state can be inferred (and reproduced) as
\begin{eqnarray}
\ket{\psi_{\tau}} \simeq \ket{\psi_{\tau,{\rm est}}} := \hat{U}(\mathbf{p}_{\rm est})^{\dagger}\ket{f},
\label{eq:ssml_state_estimate}
\end{eqnarray}
where $\mathbf{p}_{\rm est} \simeq \mathbf{p}_{\star}$. Thus, SSML runs based on the following assumption:
\begin{assumption}[Reachability]\label{assump:ssml_reachability}
There exists $\mathbf{p}_{\star}\in\mathcal{P}$ such that Eq.~(\ref{eq:ssml_objective}) holds.
\end{assumption}

Eq.~(\ref{eq:ssml_state_estimate}) expresses a practical advantage of the unitary-learning approach: once $\hat{U}(\mathbf{p}_{\rm est})$ is learned, reproduction is realized directly by applying $\hat{U}(\mathbf{p}_{\rm est})^{\dagger}$ to a known input $\ket{f}$, without requiring a separate tomographic reconstruction of $\ket{\psi_{\tau}}$~\cite{Lee2018,Lee2021}.

%---------------------------------------------------------------------------------------------------------------------------------------------------------------------------------
\subsection{Single-shot measurement and success probability}\label{subsec:ssml_measurement}

At each learning step, SSML performs a two-outcome projective measurement that asks a single ``yes-or-no'' question: Is the (processed) state equal to the fiducial state? Thus, we define
\begin{eqnarray}
\hat{M}_{f}:=\ketbra{f}{f}, \quad \hat{M}_{f^{\perp}}:=\mathds{1}-\hat{M}_{f}.
\label{eq:ssml_meas_def}
\end{eqnarray}
Given a current control parameter $\mathbf{p}$, one copy of $\hat{\rho}_{\tau}$ is transformed by $\hat{U}(\mathbf{p})$ and then measured by $\{\hat{M}_{f},\hat{M}_{f^{\perp}}\}$.  We label the outcome corresponding to $\hat{M}_{f}$ as a success ($s$) and the other outcome as a failure ($u$). The resulting success probability is
\begin{eqnarray}
F(\mathbf{p}) := \Tr\Big[\hat{M}_{f}\hat{U}(\mathbf{p})\hat{\rho}_{\tau}\hat{U}(\mathbf{p})^{\dagger}\Big] = \abs{\bra{f}\hat{U}(\mathbf{p})\ket{\psi_{\tau}}}^{2},
\label{eq:ssml_F_eps}
\end{eqnarray}
and we define the associated infidelity by
\begin{eqnarray}
\epsilon(\mathbf{p}) := 1-F(\mathbf{p}).
\label{eq:ssml_infidelity}
\end{eqnarray}
Since each iteration produces only a single bit $m_{n} \in \{s,u\}$, SSML is naturally viewed as an adaptive information-acquisition process operating under a one-bit feedback channel from the quantum system to the classical controller.

%---------------------------------------------------------------------------------------------------------------------------------------------------------------------------------
\subsection{Update rule, internal memory, and halting condition}\label{subsec:ssml_update}

The distinctive feature of SSML is that the update decision uses only the current single-shot outcome and a minimal internal memory recording the ``run length'' of consecutive successes. Let $M_{S}^{(n)}\in\mathbb{N}_{0}$ denote the number of consecutive success outcomes observed up to step $n$. Starting from $M_{S}^{(0)}=0$, the memory variable is updated as
\begin{eqnarray}
M_{S}^{(n)} = \left\{
\begin{array}{ll}
M_{S}^{(n-1)}+1, & \text{if}~m_{n}=s,\\
0, & \text{if}~m_{n}=u.
\end{array}
\right.
\label{eq:ssml_memory_update}
\end{eqnarray}
The control parameter $\mathbf{p}^{(n)}$ is updated only upon failures:
\begin{eqnarray}
\mathbf{p}^{(n+1)}=
\left\{
\begin{array}{ll}
\mathbf{p}^{(n)}, & \text{if}~m_{n}=s,\\
\mathbf{p}^{(n)}+\omega_{n}\mathbf{r}_{n}, & \text{if}~m_{n}=u,
\end{array}
\right.
\label{eq:ssml_param_update}
\end{eqnarray}
where $\mathbf{r}_{n}$ is a random direction and $\omega_{n}$ is a weighted step size of the form
\begin{eqnarray}
\omega_{n} := \alpha\big(M_{S}^{(n-1)}+1\big)^{-\beta}, \quad (\alpha>0,\ \beta>0).
\label{eq:ssml_weight_def}
\end{eqnarray}
The interpretation is intuitive: if many consecutive successes occur before a failure, the algorithm treats the current control as ``near optimal'' and thus makes a smaller random move; frequent failures yield larger exploratory steps. The completion of the learning is declared when the consecutive-success counter reaches a prescribed threshold $M_H$:
\begin{eqnarray}
M_{S}^{(n)}=M_{H}.
\label{eq:ssml_halting}
\end{eqnarray}
We refer to Eq.~(\ref{eq:ssml_halting}) as the ``halting condition,'' and we emphasize that it defines completion through an observable event (a long run of successes) rather than through an externally imposed resource budget.

\begin{remark}[Freezing on success and the certification viewpoint]\label{rem:ssml_freezing}
A structural feature that will play an important role later is that successes \emph{freeze} the parameter, i.e., $\mathbf{p}^{(n+1)}=\mathbf{p}^{(n)}$ when $m_{n}=s$. Hence, the final run of $M_{H}$ successes is generated under a fixed control $\mathbf{p}_{\rm est}$. This makes the halting condition naturally interpretable as an intrinsic sequential certificate for $F(\mathbf{p}_{\rm est})$, and hence for $\epsilon(\mathbf{p}_{\rm est})$, rather than as a purely algorithmic stopping rule.
\end{remark}

For completeness and to fix notation, {\bf Algorithm~\ref{alg:ssml_protocol}} summarizes the SSML loop in the noiseless setting.
\begin{algorithm}[H]
\caption{Single-shot measurement learning (SSML)}
\label{alg:ssml_protocol}
\begin{algorithmic}[1]
\State \textbf{Input:} fiducial state $\ket{f}$; halting threshold $M_H$; hyperparameters $\alpha,\beta$
\State Initialize $\mathbf{p}^{(1)}\leftarrow \mathbf{r}$ at random; set $M_{S}^{(0)}\leftarrow 0$
\For{$n=1,2,\dots$}
    \State Prepare one copy of $\hat{\rho}_{\tau}$ and apply $\hat{U}(\mathbf{p}^{(n)})$
    \State Measure $\{\hat{M}_{f}, \hat{M}_{f^{\perp}}\}$ and record $m_{n}\in\{s,u\}$
    \If{$m_{n}=s$}
        \State $M_{S}^{(n)}\leftarrow M_{S}^{(n-1)}+1$; \quad $\mathbf{p}^{(n+1)}\leftarrow \mathbf{p}^{(n)}$
    \Else
        \State $M_{S}^{(n)}\leftarrow 0$
        \State $\omega_{n}\leftarrow \alpha\big(M_{S}^{(n-1)}+1\big)^{-\beta}$
        \State Sample $\mathbf{r}_{n}$ and set $\mathbf{p}^{(n+1)}\leftarrow \mathbf{p}^{(n)}+\omega_{n}\mathbf{r}_{n}$
    \EndIf
    \If{$M_{S}^{(n)}=M_H$}
        \State \textbf{halt} and output $\mathbf{p}_{\rm est}\leftarrow \mathbf{p}^{(n)}$
    \EndIf
\EndFor
\end{algorithmic}
\end{algorithm}

%---------------------------------------------------------------------------------------------------------------------------------------------------------------------------------
\subsection{Parameterization and dimension dependence}\label{subsec:ssml_param}

The SSML update rule itself is agnostic to how $\hat{U}(\mathbf{p})$ is implemented. For analysis it is useful to keep track of the effective parameter dimension $m$ and its dependence on the Hilbert-space dimension $d$. A generic choice is to parameterize $\hat{U}(\mathbf{p})$ via $\mathfrak{su}(d)$ generators $\{\hat{G}_{j}\}_{j=1}^{d^{2}-1}$,
\begin{eqnarray}
\hat{U}(\mathbf{p}) = \exp\!\Big(-i\sum_{j=1}^{d^{2}-1} p_{j}\hat{G}_{j}\Big),
\quad
\mathbf{p}\in\mathbb{R}^{d^{2}-1},
\label{eq:ssml_sud_param}
\end{eqnarray}
so that $m=d^{2}-1$ in a fully general model. For qubits ($d=2$), one may equivalently write $\hat{U}(\mathbf{p})=\exp(-i\,\mathbf{p}^{T}\hat{\mathbf{G}})$ with $\hat{\mathbf{G}}=(\hat{\sigma}_{x},\hat{\sigma}_{y},\hat{\sigma}_{z})^{T}$.

While the performance of SSML depends on the choice of parametrization and on experimental constraints, our stopping-time analysis will emphasize two universal control parameters that appear already at the algorithm level: (i) the halting threshold $M_H$, which sets the strength of the run-length certificate, and (ii) the Hilbert-space dimension $d$, which controls both the intrinsic estimation difficulty and, in generic control models, the parameter dimension $m$.

%---------------------------------------------------------------------------------------------------------------------------------------------------------------------------------
\subsection{High-dimensional extension: multi-failure outcomes}\label{subsec:ssml_highd}

For completeness, we note a standard extension of SSML to $d>2$ in which the single ``failure'' outcome is refined into multiple failure labels. Fix an orthonormal basis $\{\ket{f},\ket{f_{1}^{\perp}},\dots,\ket{f_{d-1}^{\perp}}\}$ and consider the projective measurement
\begin{eqnarray}
\Big\{\ketbra{f}{f},\ \ketbra{f_{1}^{\perp}}{f_{1}^{\perp}},\ \dots,\ \ketbra{f_{d-1}^{\perp}}{f_{d-1}^{\perp}}\Big\}.
\label{eq:ssml_highd_meas}
\end{eqnarray}
A success event corresponds to detecting $\ket{f}$, while a failure event carries an additional label $\ell \in \{1,\dots,d-1\}$ specifying which orthogonal outcome occurred. Operationally, this label can be used to restrict the random update $\mathbf{r}_{n}$ to a subspace selected by the observed failure outcome (for example, the two-dimensional subspace spanned by $\ket{f}$ and $\ket{f_{\ell}^{\perp}}$), thereby reducing unnecessary motion in directions that do not contribute to increasing $F(\mathbf{p})$.  The stopping-time viewpoint adopted later applies equally to this multi-failure variant: the halting event remains a run of successes, while the search dynamics are modified through the update directions.

\begin{remark}[Scope]\label{rem:ssml_scope}
We will present our main results for the binary-outcome formulation $\{\hat{M}_{f},\hat{M}_{f^{\perp}}\}$ for notational simplicity. The extension to multi-failure outcomes primarily changes the geometry of the search dynamics and the $d$-dependence of the constants, while the certification component of the stopping-time analysis is governed by the same run-length statistics.
\end{remark}

%-------------------------------------------------------------------------------------------------------------------------------------------------------------------------------------------------------------------------------------
\section{Stopping-time formulation and performance criteria}\label{sec:stopping_time_formulation}
%-------------------------------------------------------------------------------------------------------------------------------------------------------------------------------------------------------------------------------------

SSML is not naturally described by a fixed-shot budget. The algorithm consumes one copy per step and terminates only when the run-length counter reaches the halting threshold $M_H$. Accordingly, the primary resource is a random variable: the number of copies used until completion. In this section, we formalize SSML as a general stopping-time problem and introduce performance measures that match its operational semantics.

%---------------------------------------------------------------------------------------------------------------------------------------------------------------------------------
\subsection{SSML as an adapted stochastic process}\label{subsec:stopping_process}

We denote by $\mathbf{p}^{(n)}$ the control parameter at step $n$, by $m_{n} \in \{s,u\}$ the recorded single-shot label, and by $M_{S}^{(n)}\in\mathbb{N}_{0}$ the consecutive-success counter defined in
Eq.~(\ref{eq:ssml_memory_update}). The pair $\big(\mathbf{p}^{(n)}, M_{S}^{(n)}\big)$ evolves stochastically due to the Born-rule randomness of the binary measurement and the internal randomness used to generate the perturbation directions $\mathbf{r}_{n}$ in Eq.~(\ref{eq:ssml_param_update}).

For probabilistic statements, we work with the natural filtration
\begin{eqnarray}
\mathcal{F}_{n} := \sigma\Big(m_{1},\dots,m_{n} \ ; \ \mathbf{r}_{1},\dots,\mathbf{r}_{n} \ ; \ \mathbf{p}^{(1)}\Big),
\label{eq:filtration_def}
\end{eqnarray}
i.e., the $\sigma$-algebra generated by all information revealed up to step $n$~\cite{Durrett2019}. By construction, the SSML dynamics is adapted to $\{\mathcal{F}_{n}\}_{n\ge 0}$.

Throughout, $F(\mathbf{p})$ and $\epsilon(\mathbf{p})=1-F(\mathbf{p})$ denote the success probability and infidelity, defined in Eqs.~(\ref{eq:ssml_F_eps}) and (\ref{eq:ssml_infidelity}). Conditioned on $\mathcal{F}_{n-1}$, the $n$th measurement label obeys
\begin{eqnarray}
\mathbb{P}\big(m_{n}=s \,\big|\, \mathcal{F}_{n-1}\big) &=& F\big(\mathbf{p}^{(n)}\big), \nonumber \\
\mathbb{P}\big(m_{n}=u \,\big|\, \mathcal{F}_{n-1}\big) &=& 1-F\big(\mathbf{p}^{(n)}\big).
\label{eq:conditional_outcome}
\end{eqnarray}

%---------------------------------------------------------------------------------------------------------------------------------------------------------------------------------
\subsection{Stopping time and sample-complexity metrics}\label{subsec:stopping_time_def}

The halting rule of SSML is the event that the consecutive-success counter reaches the threshold $M_H$ (see Eq.~(\ref{eq:ssml_halting})). This induces the stopping time
\begin{eqnarray}
T := \inf\big\{n \ge 1 \ : \ M_{S}^{(n)}=M_H\big\}.
\label{eq:stopping_time_T}
\end{eqnarray}
By definition, $\{T \le N\} \in \mathcal{F}_{N}$ for all $N$, hence $T$ is a stopping time with respect to $\{\mathcal{F}_{n}\}$.

A basic operational descriptor is the ``learning probability'' by time $N$~\cite{Awaya2024,Bang2025},
\begin{eqnarray}
P(N) :=\mathbb{P}(T\le N),
\label{eq:learning_prob_PN}
\end{eqnarray}
which is a nondecreasing function of $N$ and converges to $1$ when $T$ is almost surely finite.

\begin{definition}[Expected and high-confidence sample complexity]\label{def:sample_complexity}
When $\mathbb{E}[T] < \infty$, the expected sample usage is $\mathbb{E}[T]$. For a confidence parameter $\delta\in(0,1)$, the \emph{high-confidence} sample complexity is defined as the $(1-\delta)$-quantile
\begin{eqnarray}
T_{\delta} := \inf\big\{N\in\mathbb{N} \ : \ P(N)\ge 1-\delta\big\}.
\label{eq:quantile_Tdelta}
\end{eqnarray}
\end{definition}

The quantity $T_{\delta}$ is the natural metric when an experiment imposes a strict shot budget: it specifies how many copies are sufficient to complete SSML with probability at least $1-\delta$.

%---------------------------------------------------------------------------------------------------------------------------------------------------------------------------------
\subsection{Accuracy at halting and intrinsic certification}\label{subsec:accuracy_at_halting}

SSML outputs the control $\mathbf{p}_{\rm est} := \mathbf{p}^{(T)}$ and the corresponding state estimate $\hat{U}(\mathbf{p}_{\rm est})^{\dagger}\ket{f}$. The achieved infidelity at completion is therefore the random variable
\begin{eqnarray}
\epsilon_{T} := \epsilon\big(\mathbf{p}^{(T)}\big) = 1-F\big(\mathbf{p}^{(T)}\big).
\label{eq:eps_at_halting}
\end{eqnarray}
A stopping-time theory must control both the resource usage $T$ and the achieved accuracy $\epsilon_T$, and, crucially, relate them through the halting rule itself.

The key operational fact is that SSML freezes the control on success events ({\bf Remark~\ref{rem:ssml_freezing}}). Hence, the terminal run of $M_H$ successes is generated under a fixed control, and can be interpreted as an intrinsic sequential certificate.

\begin{lemma}[Certificate strength]\label{lem:certificate_strength}
Fix any $\mathbf{p} \in \mathcal{P}$ and suppose $M_H$ measurements are performed under the same control $\hat{U}(\mathbf{p})$ and the same binary test $\{\hat{M}_{f}, \hat{M}_{f^{\perp}}\}$. Then, the probability of observing $M_H$ consecutive successes is
\begin{eqnarray}
\mathbb{P}\big(\text{$M_H$ consecutive successes} \mid \mathbf{p}\big) = F(\mathbf{p})^{M_H}.
%= \big(1-\epsilon(\mathbf{p})\big)^{M_H}.
\label{eq:certificate_prob}
\end{eqnarray}
Consequently, for any $\epsilon_{0} \in (0,1)$,
\begin{widetext}
\begin{eqnarray}
\epsilon(\mathbf{p})\ge \epsilon_{0} \ \Longrightarrow \ \mathbb{P}\big(\text{$M_H$ consecutive successes} \mid \mathbf{p}\big) \le (1-\epsilon_{0})^{M_H}.
\label{eq:certificate_bound}
\end{eqnarray}
\end{widetext}
\end{lemma}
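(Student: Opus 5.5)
The plan is to show that, under a frozen control, the $M_H$ binary outcomes are independent and identically distributed Bernoulli trials with success probability $F(\mathbf{p})$. Equation~(\ref{eq:certificate_prob}) then follows from the product rule, and Eq.~(\ref{eq:certificate_bound}) follows by monotonicity of $x\mapsto x^{M_H}$.

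First I fix $\mathbf{p}$ and consider $M_H$ measurement rounds $k=1,\dots,M_H$. In each round a fresh copy of $\hat{\rho}_{\tau}$ is processed by the same $\hat{U}(\mathbf{p})$ and then measured with $\{\hat{M}_{f},\hat{M}_{f^{\perp}}\}$. Because the copies are prepared independently by $\mathsf{P}$, the joint input is the product state $\hat{\rho}_{\tau}^{\otimes M_H}$. The joint channel is $\hat{U}(\mathbf{p})^{\otimes M_H}$, and the event ``all successes'' corresponds to the product projector $\hat{M}_{f}^{\otimes M_H}$. The Born rule then gives
\begin{eqnarray}
\mathbb{P}(\text{all } s\mid\mathbf{p})=\Tr\Big[\hat{M}_{f}^{\otimes M_H}\big(\hat{U}(\mathbf{p})\hat{\rho}_{\tau}\hat{U}(\mathbf{p})^{\dagger}\big)^{\otimes M_H}\Big]=F(\mathbf{p})^{M_H},
\end{eqnarray}
using multiplicativity of the trace over tensor products and the definition in Eq.~(\ref{eq:ssml_F_eps}).

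Equivalently, in the sequential language of Sec.~\ref{subsec:stopping_process}, I would apply Eq.~(\ref{eq:conditional_outcome}) repeatedly. Conditional on the past and on the preceding outcomes being $s$, the control stays equal to $\mathbf{p}$ by the freezing rule of Remark~\ref{rem:ssml_freezing}. The conditional success probability is therefore $F(\mathbf{p})$ at every step, and the tower property telescopes the product to $F(\mathbf{p})^{M_H}$.

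For the consequence, suppose $\epsilon(\mathbf{p})\ge\epsilon_0$, i.e.\ $F(\mathbf{p})\le 1-\epsilon_0$, with both sides in $[0,1]$. Since $x\mapsto x^{M_H}$ is nondecreasing on $[0,1]$, we get $F(\mathbf{p})^{M_H}\le(1-\epsilon_0)^{M_H}$.

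The main subtlety is not computational. It is justifying conditional independence when the statement is read inside the adaptive SSML process: the conditioning must be on the control actually being $\mathbf{p}$ throughout the run. I would handle this by conditioning on $\mathcal{F}_{n-1}$ at the start of the run and inducting on the run length using Eq.~(\ref{eq:conditional_outcome}). The freezing property guarantees that the control does not change along any all-success path.
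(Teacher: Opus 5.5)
Your proposal is correct and matches the paper's reasoning. The paper gives no separate proof: it treats the lemma as immediate from the fact that, under a frozen control, the record is i.i.d.\ Bernoulli with success probability $F(\mathbf{p})$ (Remark~\ref{rem:ssml_freezing}, Sec.~\ref{subsec:noiseless_certification}), so that $\mathbb{P}(\cdot\mid\mathbf{p})=F(\mathbf{p})^{M_H}$, with monotonicity of $x\mapsto x^{M_H}$ giving the bound; your tensor-product Born-rule computation and your tower-property induction via Eq.~(\ref{eq:conditional_outcome}) are explicit versions of that same argument.
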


{\bf Lemma~\ref{lem:certificate_strength}} shows that the halting threshold $M_H$ directly tunes a statistical error exponent. For a desired significance level $\delta \in(0,1)$, choosing $\epsilon_{0}$ such that $(1-\epsilon_{0})^{M_H} \le \delta$ yields the certified accuracy scale, for $M_H \gg 1$,
\begin{eqnarray}
\epsilon_{\rm cert}(M_H, \delta) := 1 - \delta^{1/M_H} \approx \frac{\ln(1/\delta)}{M_H}.
\label{eq:eps_cert}
\end{eqnarray}
Operationally, $M_H$ is therefore an evidence budget: longer runs correspond to stronger online certification.

%---------------------------------------------------------------------------------------------------------------------------------------------------------------------------------
\subsection{Search versus certification: a structural decomposition}\label{subsec:search_cert_decomp}

The stopping-time viewpoint becomes especially transparent once we separate two qualitatively different contributions to $T$~\cite{Siegmund2013,Bang2025securePAC}.
\begin{itemize}
\item[1.] {\bf Search}: The time required for the feedback-driven updates to steer $\mathbf{p}^{(n)}$ into a regime where
$F(\mathbf{p})$ is sufficiently close to unity.
\item[2.] {\bf Certification}: The additional time required to \emph{observe} $M_H$ consecutive successes once such a
regime is reached.
\end{itemize}

To formalize this decomposition, fix $\varepsilon\in(0,1)$ and define the high-fidelity set
\begin{eqnarray}
\mathcal{G}_{\varepsilon} := \big\{\mathbf{p} \in \mathcal{P}: \epsilon(\mathbf{p})\le \varepsilon\big\}
%	&=& \big\{\mathbf{p}\in\mathcal{P}: F(\mathbf{p})\ge 1-\varepsilon\big\},
\label{eq:good_set}
\end{eqnarray}
together with the hitting time
\begin{eqnarray}
\tau_{\varepsilon}
:=\inf\big\{n\ge 1: \mathbf{p}^{(n)}\in\mathcal{G}_{\varepsilon}\big\}.
\label{eq:hitting_time_tau}
\end{eqnarray}
Once the process enters $\mathcal{G}_{\varepsilon}$ and stays there until halting, the remaining sample overhead is controlled by run statistics of (approximately) Bernoulli trials. This universal certification law will be developed later (in Sec.~\ref{sec:noiseless_bounds}) and will serve as the dimension-independent component of all sample-complexity bounds.

%---------------------------------------------------------------------------------------------------------------------------------------------------------------------------------
\subsection{Dimension dependence and the role of noise}\label{subsec:dimension_and_noise}

The decomposition above clarifies how the two key parameters enter the sample complexity.
\begin{itemize}
\item {\em Dependence on $M_H$.}---The halting threshold sets the certificate strength through $\epsilon_{\rm cert}(M_H,\delta)$ in Eq.~(\ref{eq:eps_cert}), but longer certificates necessarily require more samples to be observed.
\item {\em Dependence on $d$.}---The Hilbert-space dimension influences primarily the \emph{search} stage through the geometry and effective dimension of the control landscape for $\hat{U}(\mathbf{p})$. In contrast, once a uniform lower bound on $F$ is available, the certification cost is governed by universal run statistics and is largely independent of $d$.
\end{itemize}

Finally, the stopping-time viewpoint anticipates a distinctive robustness phenomenon. If the recorded label is corrupted by a noisy one-bit channel (classification noise), the observable success probability is capped away from $1$ even when perfect control is available. In that case, long success runs become exponentially rare and the halting time can blow up sharply; this mechanism will be quantified in Sec.~\ref{sec:noise_analysis}.

\begin{remark}[Scope of the stopping-time criteria]\label{rem:scope_perf}
The definitions of $T$, $P(N)$, and $T_{\delta}$ are purely operational and do not rely on a particular choice of update distribution $\mathbf{r}_{n}$ or a specific parametrization of $\hat{U}(\mathbf{p})$. They therefore provide a common language for comparing SSML variants and for isolating which aspects of performance are algorithmic (search dynamics) and which are statistical (certification).
\end{remark}

%-------------------------------------------------------------------------------------------------------------------------------------------------------------------------------------------------------------------------------------
\section{Noiseless analysis: search dynamics and certification cost}\label{sec:noiseless_bounds}
%-------------------------------------------------------------------------------------------------------------------------------------------------------------------------------------------------------------------------------------

In the noiseless setting, the only randomness in SSML arises from the intrinsic Born rule of the binary measurement and from the algorithmic randomness injected through the update directions $\mathbf{r}_{n}$. This section develops the technical core of our stopping-time theory in two steps. First, we quantify the certification cost induced by the run-length halting rule using universal run statistics. Second, we analyze the search dynamics that drive the success probability toward unity and combine the two ingredients into explicit sample-complexity bounds.

A recurring structural feature is that SSML is intrinsically ``two-speed'': the parameter changes only upon failures, while successes freeze the control and generate long dwell times precisely in high-fidelity regions ({\bf Remark~\ref{rem:ssml_freezing}}). This time-scale separation is what makes stopping-time analysis both operational and mathematically tractable.

%---------------------------------------------------------------------------------------------------------------------------------------------------------------------------------
\subsection{Universal certification law: runs of successes}\label{subsec:noiseless_certification}

We begin with the certification mechanism. Consider a fixed control $\mathbf{p}$ and repeated application of the binary test $\{\hat{M}_{f}, \hat{M}_{f^{\perp}}\}$. The resulting record is an i.i.d. Bernoulli process with success probability $p=F(\mathbf{p})$. Let $X_k(p)$ denote the waiting time until the first appearance of $k$ consecutive successes~\cite{Drekic2021}.

\begin{theorem}[Run waiting time: mean and tail]\label{thm:runs_mean_tail}
Let $\{B_n\}_{n\ge 1}$ be i.i.d.\ Bernoulli$(p)$ with $p\in(0,1)$ and define
\begin{eqnarray}
X_k(p) := \inf\big\{n\ge 1: B_{n-k+1} = \cdots = B_n = 1 \big\},
\label{eq:runs_waiting_time_def}
\end{eqnarray}
the waiting time for $k$ consecutive successes. Then,
\begin{eqnarray}
\mathbb{E}\!\big[X_k(p)\big] = \frac{1-p^{k}}{(1-p)p^{k}}.
\label{eq:main_Ek}
\end{eqnarray}
Moreover, for any $N\in\mathbb{N}$,
\begin{eqnarray}
\mathbb{P}\big(X_k(p) > N\big) \le \bigl(1-p^{k}\bigr)^{\left\lfloor N/k \right\rfloor}.
\label{eq:main_tail_bound}
\end{eqnarray}
\end{theorem}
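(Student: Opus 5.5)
For the mean in Eq.~(\ref{eq:main_Ek}) we would use a first-step (renewal) argument on the run-length counter. Let $E_j$ denote the expected additional waiting time until $k$ consecutive successes, given that the current run length is $j\in\{0,1,\dots,k-1\}$, and set $E_k=0$. Conditioning on the next trial gives the linear recursion
\begin{eqnarray}
E_j = 1 + p\,E_{j+1} + (1-p)\,E_0, \quad j=0,\dots,k-1.
\end{eqnarray}
This recursion is legitimate because the counter is a Markov chain on $\{0,\dots,k\}$. A failure resets the counter to $0$, and by independence of the $B_n$ the future is then a fresh copy of the process. Finiteness of $E_0$ follows from the tail bound proved below, since a geometric-type tail gives a finite mean. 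So the recursion can be solved rather than treated formally.

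Unrolling from $j=0$ upward, one gets $E_0 = \sum_{i=0}^{k-1} p^{i}\,\bigl(1+(1-p)E_0\bigr)$. Writing $S=\sum_{i=0}^{k-1}p^{i}=(1-p^{k})/(1-p)$, this reads $E_0\bigl(1-(1-p)S\bigr)=S$. Since $(1-p)S = 1-p^{k}$, we obtain $E_0\,p^{k}=S$, which is exactly $\mathbb{E}[X_k(p)] = (1-p^{k})/\bigl((1-p)p^{k}\bigr)$.

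An alternative route is a stopped-martingale argument: a gambler betting fair odds $1/p$ on success at each time. We would mention it only as a cross-check.

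For the tail bound in Eq.~(\ref{eq:main_tail_bound}) we would use a blocking argument. Partition the first $N$ trials into $\lfloor N/k\rfloor$ disjoint consecutive blocks of length $k$, namely $\{(b-1)k+1,\dots,bk\}$ for $b=1,\dots,\lfloor N/k\rfloor$. If $X_k(p)>N$, then no block can consist entirely of successes, since an all-success block would complete a run of length $k$ by time $bk\le N$. Hence
\begin{eqnarray}
\{X_k(p)>N\}\subseteq \bigcap_{b=1}^{\lfloor N/k\rfloor}\{\text{block } b \text{ is not all successes}\}.
\end{eqnarray}
The blocks involve disjoint sets of i.i.d.\ variables, so these events are independent, each with probability $1-p^{k}$. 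The bound follows immediately.

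We do not expect a serious obstacle. The only point needing care is the Markov/renewal justification of the recursion, in particular that $E_0<\infty$, so that subtracting $E_0$ terms is valid. Proving the tail bound first settles this, because it gives $\mathbb{E}[X_k]\le k/p^{k}<\infty$.
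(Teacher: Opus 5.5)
Your proposal is correct and follows the same route as the paper: the first-step recursion $E_j = 1 + pE_{j+1} + (1-p)E_0$ for the mean, and the disjoint-block independence argument for the tail. Your explicit unrolling of the recursion, and your use of the tail bound to guarantee $\mathbb{E}[X_k(p)]\le k/p^{k}<\infty$ before solving it, fill in details the paper leaves implicit.
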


\begin{proof}---The proof is two-fold.

\smallskip
\noindent {\bf Mean.} Let $E_j$ be the expected additional number of trials needed to observe $k$ consecutive successes given that the current run length is $j \in \{0,1,\dots,k\}$. Clearly $E_k=0$. For $0 \le j \le k-1$, one step is consumed and then the run length either increments to $j+1$ with probability $p$ or resets to $0$ with probability $1-p$, giving
\begin{eqnarray}
E_{j} = 1 + p E_{j+1} + (1-p) E_{0}.
\label{eq:cert_Ej_recursion}
\end{eqnarray}
Subtracting the $j=0$ equation from the general $j$ equation eliminates $E_0$ and yields a telescoping relation. Solving the resulting linear system gives
\begin{eqnarray}
E_0 = \frac{1-p^k}{(1-p)p^k},
\end{eqnarray}
which proves Eq.~(\ref{eq:main_Ek}).

\smallskip
{\bf Tail (block bound).} Partition the first $N$ trials into $B:=\lfloor N/k\rfloor$ disjoint blocks of length $k$. If a run of $k$ consecutive successes occurs within the first $N$ trials, then in particular at least one block contains $k$ successes. Hence,
\begin{eqnarray}
\mathbb{P}\big(X_{k}(p) > N\big) &\le& \mathbb{P}\Big(\text{no block is all-success}\Big) \nonumber \\
	&=& \bigl( 1-p^{k} \bigr)^{B},
\label{eq:cert_tail_block}
\end{eqnarray}
where independence holds because we used disjoint blocks. This proves Eq.~(\ref{eq:main_tail_bound}).
\end{proof}

{\bf Theorem~\ref{thm:runs_mean_tail}} is the mathematical backbone of the halting rule. Setting $k=M_H$ and $p=1-\varepsilon$ yields the canonical scaling
\begin{eqnarray}
\mathbb{E}\!\big[X_{M_H}(1-\varepsilon)\big] = \frac{1-(1-\varepsilon)^{M_H}}{\varepsilon(1-\varepsilon)^{M_H}} \approx \frac{e^{\varepsilon M_H}-1}{\varepsilon},
\label{eq:cert_scaling_recall_V}
\end{eqnarray}
together with a high-confidence implication from Eq.~(\ref{eq:main_tail_bound}).

\begin{remark}[Linear versus exponential certification]\label{rem:linear_vs_exp_cert}
The dimensionless product $\varepsilon M_H$ governs certification. If $\varepsilon M_H=O(1)$ then $(1-\varepsilon)^{M_H}=\Theta(1)$ and $\mathbb{E}[X_{M_H}(1-\varepsilon)]=\Theta(M_H)$. If $\varepsilon M_H\gg 1$ then $(1-\varepsilon)^{M_H}$ is exponentially small and the mean waiting time grows essentially as $\exp(\varepsilon M_H)$. Operationally, a strict run-length certificate is feasible only when the underlying success probability is within $O(1/M_H)$ of unity.
\end{remark}

%---------------------------------------------------------------------------------------------------------------------------------------------------------------------------------
\subsection{Universal stopping-time decomposition: search versus certification}\label{subsec:noiseless_decomposition}

We now connect the universal run statistics to SSML. Fix $\varepsilon\in(0,1)$ and recall the high-fidelity set $\mathcal{G}_{\varepsilon}$ and its hitting time $\tau_{\varepsilon}$ from Eq.~(\ref{eq:good_set}) and Eq.~(\ref{eq:hitting_time_tau}). Intuitively, $\tau_{\varepsilon}$ captures the search effort required to reach a region where the certificate is meaningful, while the additional overhead to halt is governed by $X_{M_H}(1-\varepsilon)$.

\begin{theorem}[Universal decomposition bound]\label{thm:decomposition_bound}
Fix $\varepsilon\in(0,1)$. Suppose that, on the event $\{\tau_{\varepsilon}<\infty\}$, the SSML trajectory satisfies $F(\mathbf{p}^{(n)})\ge 1-\varepsilon$ for all $n\ge \tau_{\varepsilon}$ until halting. Then, the stopping time $T$ satisfies the stochastic domination bound
\begin{eqnarray}
T \le \tau_{\varepsilon} + X_{M_H}(1-\varepsilon)
\label{eq:main_stoch_dom}
\end{eqnarray}
in stochastic dominance~\cite{Shaked2007}. In particular,
\begin{eqnarray}
\mathbb{E}[T] \le \mathbb{E}\!\big[\tau_{\varepsilon}\big] + \frac{1-(1-\varepsilon)^{M_H}}{\varepsilon(1-\varepsilon)^{M_H}}.
\label{eq:main_ET_bound}
\end{eqnarray}
Moreover, for any $\delta\in(0,1)$,
\begin{eqnarray}
T_{\delta} \le \tau_{\varepsilon, \delta/2} + \frac{M_H}{(1-\varepsilon)^{M_H}}\ln\frac{2}{\delta},
\label{eq:main_Tdelta_bound}
\end{eqnarray}
where $\tau_{\varepsilon,\delta}$ denotes the $(1-\delta)$-quantile of $\tau_{\varepsilon}$.
\end{theorem}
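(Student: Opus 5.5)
We prove the domination by a coupling argument, then obtain the mean and quantile bounds from it.

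\textbf{Coupling.} Realize every Born-rule outcome with an independent uniform variable $V_n\sim\mathrm{Unif}[0,1]$, setting $m_n=s$ iff $V_n\le F(\mathbf{p}^{(n)})$. Then define $B_n:=\mathbb{1}\{V_{\tau_\varepsilon+n}\le 1-\varepsilon\}$ for $n\ge1$. Because $\tau_\varepsilon$ is a stopping time and the $V$'s after it are independent of $\mathcal{F}_{\tau_\varepsilon}$, the strong Markov property (or simply i.i.d.\ uniforms) makes $\{B_n\}$ i.i.d.\ Bernoulli$(1-\varepsilon)$ and independent of $\tau_\varepsilon$.

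\textbf{Domination.} By hypothesis $F(\mathbf{p}^{(n)})\ge 1-\varepsilon$ for $n\ge\tau_\varepsilon$ until halting. Hence $B_n=1$ implies $m_{\tau_\varepsilon+n}=s$. A run of $M_H$ ones in $\{B_n\}$ completed at index $X:=X_{M_H}(1-\varepsilon)$ therefore forces $M_H$ consecutive recorded successes ending at step $\tau_\varepsilon+X$, unless the algorithm halted earlier. The counter may also carry successes accumulated before $\tau_\varepsilon$, but that only helps. So $T\le\tau_\varepsilon+X$ pathwise on the coupled space, which gives Eq.~(\ref{eq:main_stoch_dom}).

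We index from $\tau_\varepsilon$ itself, since $\mathbf{p}^{(\tau_\varepsilon)}$ is the control used at step $\tau_\varepsilon$. An off-by-one shift only costs one step, and we absorb it by counting $X$ from step $\tau_\varepsilon$.

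\textbf{Mean bound.} Taking expectations and applying Theorem~\ref{thm:runs_mean_tail} with $k=M_H$ and $p=1-\varepsilon$ gives Eq.~(\ref{eq:main_ET_bound}). If $\tau_\varepsilon=\infty$ with positive probability, the right-hand side is infinite and the bound is trivial.

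\textbf{Quantile bound.} Set $a:=\tau_{\varepsilon,\delta/2}$ and $b:=\lceil M_H\,(1-\varepsilon)^{-M_H}\ln(2/\delta)\rceil$. A union bound gives
\begin{eqnarray}
\mathbb{P}(T>a+b)\le\mathbb{P}(\tau_\varepsilon>a)+\mathbb{P}(X>b).
\end{eqnarray}
The first term is at most $\delta/2$ by the definition of the quantile. For the second, the tail bound Eq.~(\ref{eq:main_tail_bound}) gives
\begin{eqnarray}
\mathbb{P}(X>b)\le(1-p^{M_H})^{\lfloor b/M_H\rfloor}\le \exp\!\big(-p^{M_H}\lfloor b/M_H\rfloor\big),
\end{eqnarray}
using $1-x\le e^{-x}$. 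With $b/M_H\ge p^{-M_H}\ln(2/\delta)$, this is at most $\delta/2$.

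\textbf{Main obstacle.} The floor function is the delicate point: $\lfloor b/M_H\rfloor$ could fall below $p^{-M_H}\ln(2/\delta)$. We handle this by choosing $b$ to be $M_H$ times the ceiling of $p^{-M_H}\ln(2/\delta)$. The stated bound then holds up to at most $M_H$ additional samples, i.e.\ up to integer rounding, which should be noted explicitly when writing the proof.

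Altogether $P(a+b)\ge1-\delta$, hence $T_\delta\le a+b$, which is Eq.~(\ref{eq:main_Tdelta_bound}).
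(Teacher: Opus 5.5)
Your proposal is correct and follows essentially the same route as the paper: a uniform coupling that yields a dominating i.i.d.\ Bernoulli$(1-\varepsilon)$ record after $\tau_\varepsilon$, then linearity of expectation with Theorem~\ref{thm:runs_mean_tail} for the mean, then a union bound with the block tail bound for the quantile. Two small corrections: defining $B_n$ from $V_{\tau_\varepsilon+n}$ already gives $T\le\tau_\varepsilon+X$ with no off-by-one cost, and your rounding fix actually proves $T_\delta\le\tau_{\varepsilon,\delta/2}+M_H\lceil(1-\varepsilon)^{-M_H}\ln(2/\delta)\rceil$, so you should state the result as holding up to this additive $M_H$ (which the paper's ``solve for $N-N_1$'' step silently ignores) rather than as exactly Eq.~(\ref{eq:main_Tdelta_bound}).
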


\begin{proof}---On the event $\{\tau_{\varepsilon} < \infty\}$, let us consider the post-hitting record $\{m_{\tau_{\varepsilon}+1},m_{\tau_{\varepsilon}+2},\dots\}$. By assumption, each trial has conditional success probability at least $1-\varepsilon$ given the past. Let $\{U_n\}_{n\ge 1}$ be i.i.d.\ uniform on $[0,1]$ and couple the SSML record after $\tau_{\varepsilon}$ by setting
\begin{eqnarray}
m_{\tau_{\varepsilon}+n}=s \ \Longleftrightarrow\ U_n \le F\big(\mathbf{p}^{(\tau_{\varepsilon}+n)}\big).
\label{eq:coupling_uniform}
\end{eqnarray}
Define an auxiliary i.i.d.\ Bernoulli$(1-\varepsilon)$ process $B_n:=\mathbb{1}_{\{U_n\le 1-\varepsilon\}}$. Since $F(\mathbf{p}^{(\tau_{\varepsilon}+n)})\ge 1-\varepsilon$, we have $\{B_n=1\} \subseteq \{m_{\tau_{\varepsilon}+n}=s\}$ for every $n$ under this coupling. Therefore, whenever the auxiliary process has a run of $M_H$ consecutive successes, so does the SSML record, and the SSML halting time after $\tau_{\varepsilon}$ is stochastically dominated by $X_{M_H}(1-\varepsilon)$. This proves Eq.~(\ref{eq:main_stoch_dom}).

Taking expectations yields Eq.~(\ref{eq:main_ET_bound}) by {\bf Theorem~\ref{thm:runs_mean_tail}}. For the quantile bound, write for any $N$,
\begin{eqnarray}
&& \mathbb{P}(T>N) \nonumber \\
&& \quad \le \mathbb{P}\big(\tau_{\varepsilon}>N_1\big) + \mathbb{P}\big(X_{M_H}(1-\varepsilon)>N-N_1\big),
\label{eq:quantile_union_bound}
\end{eqnarray}
and choose $N_1:=\tau_{\varepsilon,\delta/2}$ so that the first term is at most $\delta/2$. For the second term, apply Eq.~(\ref{eq:main_tail_bound}) with $k=M_H$ and solve for $N-N_1$ so that the right-hand side is at most $\delta/2$, yielding Eq.~(\ref{eq:main_Tdelta_bound}).
\end{proof}

{\bf Theorem~\ref{thm:decomposition_bound}} makes explicit what is universal and what is model dependent: the certification term is controlled solely by run statistics, while all dependence on the control landscape and on dimension $d$ is relegated to the search time $\tau_{\varepsilon}$.

%---------------------------------------------------------------------------------------------------------------------------------------------------------------------------------
\subsection{Run length as a monitored proxy for infidelity}\label{subsec:noiseless_MS_proxy}

The SSML literature emphasizes that the run-length counter $M_S$ can be used as a real-time indicator of accuracy. We record the precise probabilistic content because it will be used repeatedly in the search analysis.

Fix a parameter $\mathbf{p}$ and consider the number $L(\mathbf{p})$ of consecutive successes observed before the next failure when measuring with success probability $F(\mathbf{p})=1-\epsilon(\mathbf{p})$. Then,
\begin{eqnarray}
\mathbb{P}\big(L(\mathbf{p})=\ell \big| \mathbf{p}\big) = \bigl(1-\epsilon(\mathbf{p})\bigr)^{\ell} \epsilon(\mathbf{p}),
\label{eq:geom_run_dist}
\end{eqnarray}
for $\ell=0,1,2,\dots$, i.e., $L(\mathbf{p})$ is geometric with mean
\begin{eqnarray}
\mathbb{E}\big[L(\mathbf{p}) \big| \mathbf{p}\big] = \frac{1-\epsilon(\mathbf{p})}{\epsilon(\mathbf{p})} = \epsilon(\mathbf{p})^{-1}-1.
\label{eq:geom_mean}
\end{eqnarray}
Therefore, whenever the process is locally stationary so that $\mathbf{p}$ is effectively fixed over many shots, one obtains the monitored proxy
\begin{eqnarray}
\epsilon(\mathbf{p}) \approx \frac{1}{1+M_S}.
\label{eq:MS_proxy}
\end{eqnarray}
This heuristic relation is the basis for the empirically successful choice $\beta \simeq 1/2$ in SSML implementations~\cite{Lee2018}.

%---------------------------------------------------------------------------------------------------------------------------------------------------------------------------------
\subsection{Search dynamics: a failure-driven skeleton chain}\label{subsec:noiseless_search_skeleton}

The search component is model dependent because it depends on how $\hat{U}(\mathbf{p})$ is parametrized and how the random directions $\mathbf{r}_{n}$ are chosen. Nevertheless, SSML admits a useful universal reformulation in terms of a failure-driven skeleton.

Define the sequence of failure times
\begin{eqnarray}
\tau_{0}:=0, \quad \tau_{k+1}:=\inf\big\{n>\tau_{k}: m_{n}=u\big\},
\label{eq:failure_times}
\end{eqnarray}
for $k \ge 0$, and define the skeleton parameters right after each update,
\begin{eqnarray}
\mathbf{P}_{k} := \mathbf{p}^{(\tau_{k}+1)}.
\label{eq:skeleton_param_def}
\end{eqnarray}
Between $\tau_{k}$ and $\tau_{k+1}$ the parameter is frozen and the process simply produces a run of successes of random length $L_{k}:=\tau_{k+1}-\tau_{k}-1$. Conditioned on $\mathbf{P}_{k}$, the run length $L_{k}$ is geometric with parameter $\epsilon(\mathbf{P}_{k})$ as in Eq.~(\ref{eq:geom_run_dist}).

This representation makes two key facts explicit.
\begin{itemize}
\item[(i)] {\em Time is reweighted by fidelity.}---A parameter with smaller $\epsilon(\mathbf{P}_{k})$ generates longer success runs and thus occupies a larger fraction of physical time (number of consumed copies). This is the mechanism by which SSML ``prefers'' high-fidelity parameters without explicitly computing gradients.
\item[(ii)]{\em Updates are scale-adaptive.}---The step size at the $k$th update is $\omega_{k}=\alpha(L_{k}+1)^{-\beta}$, so in the high-fidelity regime Eq.~(\ref{eq:geom_mean}) suggests $\omega_{k}\sim \alpha\,\epsilon(\mathbf{P}_{k})^{\beta}$. In particular, for $\beta=1/2$,
\begin{eqnarray}
\omega_{k} \sim \alpha \sqrt{\epsilon(\mathbf{P}_{k})}.
\label{eq:omega_sqrt_eps}
\end{eqnarray}
This scale matching is at the heart of the near-linear sample complexity in the noiseless regime.
\end{itemize}

%---------------------------------------------------------------------------------------------------------------------------------------------------------------------------------
\subsection{Local refinement in the quadratic regime}\label{subsec:noiseless_local_refinement}

To translate the skeleton representation into quantitative bounds, we focus on the refinement stage that dominates for large $M_H$: the regime where $\epsilon(\mathbf{p}) \ll 1$ and the infidelity landscape is approximately quadratic.

\begin{assumption}[Random perturbation model]\label{assump:random_perturb}
The update directions $\{\mathbf{r}_{n}\}$ are i.i.d. random vectors in an effective parameter dimension $m$. They are isotropic and satisfy $\norm{\mathbf{r}_{n}}=1$ almost surely (or are uniformly bounded with an isotropic angular distribution). Moreover, conditioned on the current control $\mathbf{p}^{(n)}$, the draw of $\mathbf{r}_{n}$ is independent of the Born-rule randomness in the subsequent measurement outcome.
\end{assumption}

\begin{assumption}[Locally quadratic infidelity and bounded refinement]\label{assump:local_quadratic}
There exist constants $c_{-}, c_{+} > 0$ and a neighborhood $\mathcal{U}$ of an optimal parameter $\mathbf{p}_{\star}$ such that, for all $\mathbf{p}\in\mathcal{U}$,
\begin{eqnarray}
c_{-} \norm{\mathbf{p}-\mathbf{p}_{\star}}^{2} \le \epsilon(\mathbf{p}) \le c_{+} \norm{\mathbf{p}-\mathbf{p}_{\star}}^{2}.
\label{eq:main_local_quad}
\end{eqnarray}
We assume that during the refinement stage the SSML trajectory remains inside $\mathcal{U}$.
\end{assumption}

{\bf Assumption~\ref{assump:local_quadratic}} formalizes the standard intuition that near an optimum the infidelity is quadratic in the control error, and it is consistent with the heuristic choice $\beta \simeq 1/2$. Under these assumptions, one can prove a scale-invariant one-step improvement event.

\begin{lemma}[Scale-invariant local improvement]\label{lem:local_improvement_event}
Assume $\beta=1/2$ and {\bf Assumptions~\ref{assump:random_perturb}} and {\bf \ref{assump:local_quadratic}}. There exist constants $\kappa \in (0,1)$ and $\rho \in (0,1)$, depending only on
$(\alpha,m,c_{-},c_{+})$ and on the distribution of $\mathbf{r}_{n}$, such that for every $\mathbf{p}\in\mathcal{U}$ with $\epsilon(\mathbf{p})$ sufficiently small, the next failure-triggered update satisfies
\begin{eqnarray}
\mathbb{P}\left( \epsilon(\mathbf{p}^{+}) \le (1-\kappa)\epsilon(\mathbf{p}) \mid \mathbf{p} \right) \ge \rho,
\label{eq:local_improve_prob}
\end{eqnarray}
where $\mathbf{p}^{+}$ denotes the post-update parameter.
\end{lemma}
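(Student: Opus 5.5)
The plan is to condition on the current parameter $\mathbf{p}\in\mathcal{U}$ and study the single excursion that ends in the next failure, as in the skeleton chain of Sec.~\ref{subsec:noiseless_search_skeleton}. Under $\mathbf{p}$ the run length $L$ before the failure is geometric with parameter $\epsilon:=\epsilon(\mathbf{p})$, by Eq.~(\ref{eq:geom_run_dist}). The update is $\mathbf{p}^{+}=\mathbf{p}+\omega\mathbf{r}$, with step size $\omega=\alpha(L+1)^{-1/2}$ since $\beta=1/2$. By Assumption~\ref{assump:random_perturb}, the direction $\mathbf{r}$ is independent of $L$. Write $D:=\norm{\mathbf{p}-\mathbf{p}_{\star}}$ and $\lambda:=\omega/D$. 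Then
\begin{eqnarray}
\norm{\mathbf{p}^{+}-\mathbf{p}_{\star}}^{2}=D^{2}\big(1+2\lambda\cos\theta+\lambda^{2}\big),
\end{eqnarray}
where $\theta$ is the angle between $\mathbf{r}$ and $\mathbf{p}-\mathbf{p}_{\star}$. Combining with Assumption~\ref{assump:local_quadratic} gives
\begin{eqnarray}
\epsilon(\mathbf{p}^{+})\le c_{+}\norm{\mathbf{p}^{+}-\mathbf{p}_{\star}}^{2}\le \frac{c_{+}}{c_{-}}\big(1+2\lambda\cos\theta+\lambda^{2}\big)\,\epsilon .
\end{eqnarray}
It therefore suffices to exhibit a joint event $A\cap B$ of probability at least $\rho$ on which $1+2\lambda\cos\theta+\lambda^{2}\le (1-\kappa)c_{-}/c_{+}$.

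The two events are chosen as follows. Event $B$ is $\{\cos\theta\le -1+\eta\}$, a spherical cap around the direction $-(\mathbf{p}-\mathbf{p}_{\star})/D$. By isotropy its probability $\rho_{B}(\eta,m)>0$ depends only on $\eta$, $m$ and the law of $\mathbf{r}$. Event $A$ is $\{\lambda\in[1-\eta',1+\eta']\}$, which is equivalent to the window
\begin{eqnarray}
\frac{\alpha^{2}}{(1+\eta')^{2}D^{2}}\le L+1\le \frac{\alpha^{2}}{(1-\eta')^{2}D^{2}} .
\end{eqnarray}
On $A\cap B$ the bracket is at most $(1-\lambda)^{2}+2\lambda\eta\le \eta'^{2}+2(1+\eta')\eta$. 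This can be made at most $\tfrac12 c_{-}/c_{+}$ by choosing $\eta,\eta'$ small in terms of $c_{-}/c_{+}$ only, which gives $\kappa=1/2$. By independence of $\mathbf{r}$ and $L$, $\mathbb{P}(A\cap B\mid\mathbf{p})=\mathbb{P}(A)\,\rho_{B}$.

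The remaining work is a uniform lower bound on $\mathbb{P}(A)$, and this is the step I expect to be delicate, since it must not degrade as $\epsilon\to0$. For a geometric law,
\begin{eqnarray}
\mathbb{P}(a\le L+1\le b)\approx (1-\epsilon)^{a}-(1-\epsilon)^{b}\approx e^{-\epsilon a}-e^{-\epsilon b}.
\end{eqnarray}
Because $c_{-}D^{2}\le\epsilon\le c_{+}D^{2}$, the products $\epsilon a$ and $\epsilon b$ both lie in $[\,c_{-}\alpha^{2}/(1+\eta')^{2},\ c_{+}\alpha^{2}/(1-\eta')^{2}\,]$, with ratio $b/a=(1+\eta')^{2}/(1-\eta')^{2}>1$ fixed. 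Hence $e^{-\epsilon a}-e^{-\epsilon b}\ge e^{-\epsilon b}\big(e^{\epsilon(b-a)}-1\big)$ is bounded below by a positive constant depending only on $(\alpha,c_{\pm},\eta')$. This is exactly the scale invariance produced by $\beta=1/2$: the step size tracks $\sqrt{\epsilon}\sim D$.

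Two technical points then need care. First, $\epsilon$ must be sufficiently small that $a\ge1$ and integer rounding of the window $[a,b]$ is negligible, so that the approximations above are valid. Second, one must check that $\mathbf{p}^{+}$ stays in $\mathcal{U}$. This is supplied by the standing clause of Assumption~\ref{assump:local_quadratic}, or alternatively by noting that on $A\cap B$ the new point is closer to $\mathbf{p}_{\star}$ than $\mathbf{p}$. Setting $\rho:=\rho_{B}\cdot\inf_{\epsilon\,\mathrm{small}}\mathbb{P}(A)$ completes the argument.
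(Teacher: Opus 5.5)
Your argument is correct, and its architecture matches the paper's: a step-size window event on the geometric run length $L$, intersected with an independent spherical-cap event for $\mathbf{r}$, then the expansion of $\norm{\mathbf{x}+\omega\mathbf{r}}^{2}$ and the $c_{\pm}$ sandwich from Assumption~\ref{assump:local_quadratic}. The difference is in the calibration, and it matters.

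\emph{The paper's calibration.} It centers its window on the infidelity, $L\in[a/\varepsilon,b/\varepsilon]$ with fixed $a<b$, and takes a fixed cap $\hat{\mathbf{x}}\cdot\mathbf{r}\le-\gamma$. Since $\norm{\mathbf{x}}^{2}$ is only known to lie in $[\varepsilon/c_{+},\varepsilon/c_{-}]$, the resulting bounds satisfy $\lambda_{\max}^{2}/\lambda_{\min}^{2}\ge (c_{+}/c_{-})(b/a)$. Minimizing over $\lambda_{\min}$ then shows that the left-hand side of Eq.~(\ref{eq:choose_kappa_condition}) is at least $c_{+}/c_{-}-\gamma^{2}a/b$. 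So that condition can be met only when $c_{+}/c_{-}<2$.

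\emph{Your calibration.} You center the window on the actual distance $D$, which is legitimate because everything is conditioned on $\mathbf{p}$. You pair $\lambda\approx 1$ with a thin cap around $-\hat{\mathbf{x}}$, which drives $\norm{\mathbf{x}^{+}}^{2}/\norm{\mathbf{x}}^{2}$ below any prescribed constant. This absorbs an arbitrary ratio $c_{+}/c_{-}$. The $D^{2}$-versus-$\epsilon$ mismatch moves into the window probability, and your use of $c_{-}D^{2}\le\epsilon\le c_{+}D^{2}$ there handles it uniformly in $\epsilon$. The price is a thinner cap and hence a smaller $\rho$, roughly $(c_{-}/c_{+})^{(m-1)/2}$ up to $m$-dependent factors. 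That is still uniform in $\epsilon$, which is all the lemma asserts.

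The rounding issue you flag is harmless. Exactly, $\mathbb{P}(A\mid\mathbf{p})\ge(1-\epsilon)^{a}-(1-\epsilon)^{b-1}$ for your endpoints $a,b$. For small $\epsilon$ this is bounded below by a positive constant, because $\epsilon a$ stays bounded while $\epsilon(b-a)$ stays bounded away from zero.
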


\begin{proof}---Fix $\mathbf{p}\in\mathcal{U}$ and write $\mathbf{x}:=\mathbf{p}-\mathbf{p}_{\star}$. Let $\varepsilon:=\epsilon(\mathbf{p})$. Conditioned on $\mathbf{p}$, the run length $L$ before the next failure is geometric with parameter $\varepsilon$ (Eq.~(\ref{eq:geom_run_dist})), and the subsequent update has the form $\mathbf{p}^{+}=\mathbf{p}+\omega\,\mathbf{r}$ with $\omega=\alpha(L+1)^{-1/2}$ and $\mathbf{r}$ isotropic ({\bf Assumption~\ref{assump:random_perturb}}).

\smallskip
\noindent {\bf Step-size window.} Fix constants $0 < a < b$ and define the event
\begin{eqnarray}
\mathcal{E}_{L} := \Big\{ \Big\lceil \frac{a}{\varepsilon} \Big\rceil \le L \le \Big\lfloor \frac{b}{\varepsilon} \Big\rfloor \Big\}.
\label{eq:event_EL_def}
\end{eqnarray}
Using $\mathbb{P}(L \ge \ell \mid \mathbf{p}) = (1-\varepsilon)^{\ell}$, we have
\begin{eqnarray}
\mathbb{P}(\mathcal{E}_{L}\mid \mathbf{p}) = (1-\varepsilon)^{\lceil a/\varepsilon\rceil} - (1-\varepsilon)^{\lfloor b/\varepsilon\rfloor+1}.
\label{eq:EL_prob_exact}
\end{eqnarray}
For $\varepsilon$ sufficiently small, this probability is bounded below by a positive constant depending only on $(a,b)$ (indeed it converges to $e^{-a}-e^{-b}$ as $\varepsilon\to 0$). On $\mathcal{E}_L$, the step size satisfies
\begin{eqnarray}
\alpha (b/\varepsilon+1)^{-1/2} \le \omega \le \alpha(a/\varepsilon)^{-1/2},
\label{eq:omega_window}
\end{eqnarray}
hence, $\omega=\Theta(\sqrt{\varepsilon})$.

\smallskip
\noindent {\bf Directional improvement.} Let $\hat{\mathbf{x}}:=\mathbf{x}/\norm{\mathbf{x}}$ and define $U:=\hat{\mathbf{x}} \cdot \mathbf{r}$. By isotropy, $U$ is a symmetric random variable whose distribution depends only on $m$. Fix $\gamma \in (0,1)$ and let $\mathcal{E}_{r} := \{U \le -\gamma\}$. Then, $\mathbb{P}(\mathcal{E}_{r}) = \mu_m(\gamma) > 0$, the measure of a spherical cap.

On $\mathcal{E}_L \cap \mathcal{E}_r$, we can bound the post-update distance:
\begin{eqnarray}
\norm{\mathbf{x}^{+}}^{2} &=& \norm{\mathbf{x}+\omega\mathbf{r}}^{2} \nonumber \\
	&=& \norm{\mathbf{x}}^{2}+\omega^{2}+2\omega\norm{\mathbf{x}} U \nonumber \\
	&\le& \norm{\mathbf{x}}^{2}\Big(1+\lambda^{2}-2\lambda\gamma\Big),
\label{eq:xplus_bound}
\end{eqnarray}
where $\lambda:=\omega/\norm{\mathbf{x}}$. By {\bf Assumption~\ref{assump:local_quadratic}}, $\norm{\mathbf{x}}^{2}\le \varepsilon/c_{-}$ and $\norm{\mathbf{x}}^{2}\ge \varepsilon/c_{+}$. By combining with Eq.~(\ref{eq:omega_window}), we can yield constants $\lambda_{\min},\lambda_{\max} > 0$ (depending only on $\alpha, a, b, c_{\pm}$) such that $\lambda \in [\lambda_{\min}, \lambda_{\max}]$ on $\mathcal{E}_L$. Choose $(a,b)$ and $\gamma$ so that
\begin{eqnarray}
\frac{c_{+}}{c_{-}}\Big(1+\lambda_{\max}^{2}-2\lambda_{\min}\gamma\Big) \le 1-\kappa
\label{eq:choose_kappa_condition}
\end{eqnarray}
for some $\kappa\in(0,1)$. Then, on $\mathcal{E}_L \cap \mathcal{E}_r$, we have
\begin{eqnarray}
\epsilon(\mathbf{p}^{+}) \le c_{+}\norm{\mathbf{x}^{+}}^{2} &\le& (1-\kappa)c_{-}\norm{\mathbf{x}}^{2} \nonumber \\
	&\le& (1-\kappa)\epsilon(\mathbf{p}),
\end{eqnarray}
which proves the contraction event.

Finally, by construction and conditional independence,
\begin{eqnarray}
\mathbb{P}(\mathcal{E}_L \cap \mathcal{E}_r \mid \mathbf{p}) = \mathbb{P}(\mathcal{E}_L \mid \mathbf{p}) \mathbb{P}(\mathcal{E}_r) \ge \rho,
\end{eqnarray}
for some $\rho > 0$ uniform over all sufficiently small $\varepsilon$ in $\mathcal{U}$, where $\rho$ depends only on $(\alpha, m, c_{\pm})$ and the choice of $(a, b, \gamma)$. This proves Eq.~(\ref{eq:local_improve_prob}).
\end{proof}

{\bf Lemma~\ref{lem:local_improvement_event}} formalizes an important conceptual point: for $\beta=1/2$, the update magnitude matches the local error scale, so the chance of making a constant-factor
improvement does not vanish as $\epsilon\to 0$.

We now convert one-step improvements into a refinement bound.

\begin{proposition}[Local refinement cost]\label{prop:local_refinement}
Assume the setting of {\bf Lemma~\ref{lem:local_improvement_event}}. Fix $\epsilon_{0} \in (0,1)$, such that $\{\epsilon(\mathbf{p}) \le \epsilon_{0}\} \subset \mathcal{U}$. Starting from any $\mathbf{p} \in \mathcal{U}$ with $\epsilon(\mathbf{p}) \le \epsilon_{0}$, let $K$ be the number of failure-triggered updates needed to reach $\epsilon \le \epsilon_{\rm targ}$ for a target
$\epsilon_{\rm targ}\in(0,\epsilon_{0})$.
Then,
\begin{eqnarray}
\mathbb{E}[K] \le \frac{1}{\rho} \Bigg\lceil \frac{\ln(\epsilon_{0}/\epsilon_{\rm targ})}{\abs{\ln(1-\kappa)}} \Bigg\rceil.
\label{eq:EK_bound}
\end{eqnarray}
Moreover, the expected number of consumed copies during this refinement stage satisfies
\begin{eqnarray}
\mathbb{E}[N_{\rm ref}] \le \frac{1}{\epsilon_{\rm targ}}\,\mathbb{E}[K] \le \frac{1}{\rho\epsilon_{\rm targ}}\Bigg\lceil \frac{\ln(\epsilon_{0}/\epsilon_{\rm targ})}{\abs{\ln(1-\kappa)}} \Bigg\rceil.
\label{eq:ENref_bound}
\end{eqnarray}
\end{proposition}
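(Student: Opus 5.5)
The plan is to treat the refinement stage as a sequence of Bernoulli ``improvement trials'' along the failure-driven skeleton chain $\{\mathbf{P}_k\}$ of Sec.~\ref{subsec:noiseless_search_skeleton}. Set
\begin{eqnarray}
J:=\left\lceil \frac{\ln(\epsilon_{0}/\epsilon_{\rm targ})}{\abs{\ln(1-\kappa)}}\right\rceil .
\end{eqnarray}
Any $J$ contractions by the factor $(1-\kappa)$ starting from $\epsilon\le\epsilon_0$ bring the infidelity below $\epsilon_{\rm targ}$. This holds provided no update in between increases $\epsilon$, or, more robustly, provided we count ``record'' contractions relative to the current best level.

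To make this rigorous I would consider the level sequence $\epsilon_0(1-\kappa)^j$ and define $K_j$ as the number of updates spent while the running record sits at level $j$. By Lemma~\ref{lem:local_improvement_event}, each update from a parameter in $\mathcal{U}$ with small $\epsilon$ succeeds in contracting with conditional probability at least $\rho$, uniformly in the past. Conditional independence is supplied by Assumption~\ref{assump:random_perturb} together with the geometric run structure. Hence each $K_j$ is stochastically dominated by a Geometric$(\rho)$ variable, $\mathbb{E}[K_j]\le 1/\rho$, and summing over $j=1,\dots,J$ gives Eq.~(\ref{eq:EK_bound}). The optional-stopping or Wald form of this summation needs only the uniform lower bound $\rho$ on each conditional success probability.

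For Eq.~(\ref{eq:ENref_bound}) I would write
\begin{eqnarray}
N_{\rm ref}=\sum_{k<K}(L_k+1).
\end{eqnarray}
Conditioned on $\mathbf{P}_k$, Eq.~(\ref{eq:geom_mean}) gives $\mathbb{E}[L_k+1\mid\mathbf{P}_k]=1/\epsilon(\mathbf{P}_k)$. Before the target is reached, every skeleton parameter has $\epsilon(\mathbf{P}_k)>\epsilon_{\rm targ}$, so $\mathbb{E}[L_k+1\mid\mathcal{F}_{\tau_k}]\le 1/\epsilon_{\rm targ}$. The event $\{k<K\}$ is determined before the $k$th run, so a Wald-type identity gives $\mathbb{E}[N_{\rm ref}]\le \mathbb{E}[K]/\epsilon_{\rm targ}$, and combining with Eq.~(\ref{eq:EK_bound}) finishes the argument.

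The main obstacle is the monotonicity issue in the first step. Lemma~\ref{lem:local_improvement_event} controls only the probability of improvement, not what happens on its complement, where $\epsilon$ may increase. Then a contraction does not necessarily produce net progress, and the bound on $\epsilon(\mathbf{P}_k)$ used inside Wald's identity could also fail. I would first try to handle this by interpreting the proposition, in line with Assumption~\ref{assump:local_quadratic}, as saying that the trajectory stays in $\mathcal{U}$ with $\epsilon\le\epsilon_0$. The improvement count would then be taken relative to a record or accepted value, for instance by noting that with geometric dwell times, worse parameters are quickly abandoned. If that route is not clean, I would state the needed non-increase (or bounded-excursion) condition explicitly as part of ``the setting'' of Lemma~\ref{lem:local_improvement_event}.
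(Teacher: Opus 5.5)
Your route is the same as the paper's: it also dominates the contraction count by a Binomial$(k,\rho)$ variable to get $J/\rho$ updates, then sums the dwell times $L_k+1$. The monotonicity obstacle you flag is real, and it is exactly the unjustified step in the paper's proof. That proof simply asserts that, since each successful contraction lowers $\epsilon$ by $(1-\kappa)$, it suffices to collect $J$ of them.

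None of your patches closes this gap.
\begin{itemize}
\item \textbf{Record counting.} SSML keeps no record or accepted value; every failure moves $\mathbf{p}$ from its \emph{current} position. After an upward excursion, a $(1-\kappa)$-contraction need not beat the record, so the per-update probability of lowering the record is not bounded below by $\rho$. Fast abandonment of bad parameters saves copies, not updates.
\item \textbf{Non-increase.} This hypothesis contradicts the model. With $c_{+}=c_{-}=c$ one has $\epsilon(\mathbf{p}^{+})/\epsilon(\mathbf{p})=1+\lambda^{2}+2\lambda U>1$ whenever $U\ge 0$, which happens with probability at least $1/2$.
\item \textbf{Confinement.} Staying in $\{\epsilon\le\epsilon_{0}\}$ bounds the size of excursions, not their number.
\end{itemize}
An optional-stopping argument would need a negative drift, $\mathbb{E}[\ln\epsilon(\mathbf{P}_{k+1})-\ln\epsilon(\mathbf{P}_{k})\mid\mathcal{F}_{\tau_{k}}]\le-\eta<0$. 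Lemma~\ref{lem:local_improvement_event} cannot supply this, because it says nothing about the complement event. The drift in fact fails in the isotropic case $c_{+}=c_{-}$. Since $\ln\norm{\mathbf{y}}$ is subharmonic on $\mathbb{R}^{m}$ for $m\ge 2$, averaging over the uniform direction $\mathbf{r}$ gives $\mathbb{E}[\ln\epsilon(\mathbf{p}^{+})\mid\mathbf{p},L]\ge\ln\epsilon(\mathbf{p})$. So $\ln\epsilon$ is a submartingale along the skeleton chain. The bound on $\mathbb{E}[K]$ is therefore established by neither your argument nor the paper's. It needs an extra drift or monotonicity hypothesis that the stated assumptions do not provide and are in tension with.

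The second part needs no monotonicity, so your worry about the Wald step is misplaced. For $k<K$ one has $\epsilon(\mathbf{P}_{k})>\epsilon_{\rm targ}$ by the very definition of $K$ as a first-passage index. Your Wald argument uses two facts: $\{k<K\}\in\mathcal{F}_{\tau_{k}}$, and $\mathbb{E}[L_{k}+1\mid\mathcal{F}_{\tau_{k}}]=1/\epsilon(\mathbf{P}_{k})$. This is a sharper justification than the paper's appeal to plain linearity of expectation, which does not by itself handle the random number of summands. It gives $\mathbb{E}[N_{\rm ref}]\le\mathbb{E}[K]/\epsilon_{\rm targ}$ whatever $\mathbb{E}[K]$ turns out to be.
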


\begin{proof}---Define an update as ``successful'' if it produces the contraction event $\epsilon(\mathbf{p}^{+}) \le (1-\kappa)\epsilon(\mathbf{p})$. By {\bf Lemma~\ref{lem:local_improvement_event}}, each update is successful with conditional probability at least $\rho$. Using the uniform coupling argument from the proof of {\bf Theorem~\ref{thm:decomposition_bound}}, the number of successful contractions after $k$ updates stochastically dominates a Binomial$(k,\rho)$ random variable, and hence the expected number of updates required to obtain $s$ successful contractions is at most $s/\rho$.

Each successful contraction reduces $\epsilon$ by at least a factor $(1-\kappa)$, so it suffices to obtain
\begin{eqnarray}
s \ge \frac{\ln(\epsilon_{0}/\epsilon_{\rm targ})}{\abs{\ln(1-\kappa)}}
\label{eq:needed_contractions}
\end{eqnarray}
successful contractions. This proves Eq.~(\ref{eq:EK_bound}).

For the shot cost, note that before reaching $\epsilon_{\rm targ}$ we always have $\epsilon(\mathbf{P}_{k}) \ge \epsilon_{\rm targ}$ at every skeleton parameter in the refinement stage. Conditioned on $\mathbf{P}_{k}$, the number of consumed copies between two successive updates equals $L_{k}+1$ and has mean $\mathbb{E}[L_{k}+1 \mid \mathbf{P}_{k}]=1/\epsilon(\mathbf{P}_{k}) \le 1/\epsilon_{\rm targ}$ (cf. Eq.~(\ref{eq:geom_mean})).
Therefore, by linearity of expectation,
\begin{eqnarray}
\mathbb{E}[N_{\rm ref}] = \mathbb{E}\Big[\sum_{k=0}^{K-1}(L_k+1)\Big] \le \frac{1}{\epsilon_{\rm targ}}\,\mathbb{E}[K],
\end{eqnarray}
which gives Eq.~(\ref{eq:ENref_bound}).
\end{proof}

\begin{remark}[Refinement cost and the $1/\epsilon$ law]\label{rem:ref_cost_intuition}
The scaling $\mathbb{E}[L+1\mid \epsilon]=1/\epsilon$ is a direct consequence of freezing on success: probing a high-fidelity control consumes more copies precisely because long success runs occur. Thus, any sequential scheme that uses run-length evidence is naturally expected to exhibit a $1/\epsilon$ resource law.
\end{remark}

%---------------------------------------------------------------------------------------------------------------------------------------------------------------------------------
\subsection{Noiseless stopping-time sample complexity}\label{subsec:noiseless_combine}

We now connect the refinement analysis to the stopping time $T$. A principled target for the search stage is the certificate scale. For a desired significance $\delta$, Eq.~(\ref{eq:eps_cert}) suggests choosing
\begin{eqnarray}
\epsilon_{\rm targ} \asymp \epsilon_{\rm cert}(M_H,\delta) \approx \frac{\ln(1/\delta)}{M_H},
\label{eq:eps_targ_choice}
\end{eqnarray}
because this keeps the certification cost in the linear regime of {\bf Remark~\ref{rem:linear_vs_exp_cert}}.

By combining {\bf Proposition~\ref{prop:local_refinement}} with the decomposition bound of {\bf Theorem~\ref{thm:decomposition_bound}}, we can consider an explicit noiseless stopping-time upper bound as below.
\begin{theorem}[Noiseless stopping-time upper bound: scaling law]\label{thm:noiseless_main}
Assume the reachability condition {\rm ({\bf Assumption~\ref{assump:ssml_reachability}})} and the local refinement assumptions {\rm ({\bf Assumptions~\ref{assump:random_perturb}} and {\bf \ref{assump:local_quadratic}})} with $\beta=1/2$. Fix $\varepsilon\in(0,\epsilon_{0})$ and suppose that once the SSML trajectory enters $\mathcal{G}_{\varepsilon}$ it remains in $\mathcal{U}$ and satisfies $F(\mathbf{p}^{(n)})\ge 1-\varepsilon$ until halting. Then,
\begin{eqnarray}
\mathbb{E}[T] \le \mathbb{E}\big[\tau_{\varepsilon}\big] + \frac{1-(1-\varepsilon)^{M_H}}{\varepsilon(1-\varepsilon)^{M_H}},
\label{eq:noiseless_ET_general}
\end{eqnarray}
and, in the refinement-dominated regime where $\varepsilon$ is reached through the local dynamics starting from $\epsilon(\mathbf{p})\le \epsilon_{0}$,
\begin{eqnarray}
\mathbb{E}\big[\tau_{\varepsilon}\big] \le \mathbb{E}\big[\tau_{\epsilon_{0}}\big] + \frac{1}{\rho \varepsilon} \Bigg\lceil \frac{\ln(\epsilon_{0}/\varepsilon)}{\abs{\ln(1-\kappa)}} \Bigg\rceil,
\label{eq:noiseless_search_bound}
\end{eqnarray}
where $(\rho,\kappa)$ are the constants from {\bf Lemma~\ref{lem:local_improvement_event}}. In particular, choosing $\varepsilon=\Theta(1/M_H)$ places certification in the linear regime and yields the scaling
\begin{eqnarray}
\mathbb{E}[T] = O\Big( \mathbb{E}[\tau_{\epsilon_{0}}] + \mathsf{K}(d) M_H\ln{M_H} + M_H \Big),
\label{eq:noiseless_scaling_summary}
\end{eqnarray}
where $\mathsf{K}(d)$ denotes the dimension-dependent inverse improvement probability $\mathsf{K}(d)\asymp 1/\rho$.
\end{theorem}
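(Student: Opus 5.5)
The plan is to chain {\bf Theorem~\ref{thm:decomposition_bound}} with {\bf Proposition~\ref{prop:local_refinement}} via the strong Markov property at the hitting time $\tau_{\epsilon_0}$, and then specialize $\varepsilon=\Theta(1/M_H)$.

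\emph{First bound.} Eq.~(\ref{eq:noiseless_ET_general}) is Eq.~(\ref{eq:main_ET_bound}) verbatim. The hypothesis that, after entering $\mathcal{G}_{\varepsilon}$, the trajectory keeps $F(\mathbf{p}^{(n)})\ge 1-\varepsilon$ until halting is exactly the hypothesis of {\bf Theorem~\ref{thm:decomposition_bound}}. So we only need to invoke its coupling argument together with the mean formula Eq.~(\ref{eq:main_Ek}) at $k=M_H$, $p=1-\varepsilon$.

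\emph{Second bound.} Write $\tau_{\varepsilon}=\tau_{\epsilon_0}+(\tau_{\varepsilon}-\tau_{\epsilon_0})$; this is valid because $\mathcal{G}_{\varepsilon}\subset\mathcal{G}_{\epsilon_0}$ gives $\tau_{\epsilon_0}\le\tau_{\varepsilon}$. Condition on $\mathcal{F}_{\tau_{\epsilon_0}}$. The SSML dynamics is a time-homogeneous Markov process in the state $(\mathbf{p}^{(n)},M_S^{(n)})$: Assumption~\ref{assump:random_perturb} gives fresh i.i.d.\ directions, and the Born rule gives conditionally independent outcomes, Eq.~(\ref{eq:conditional_outcome}). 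Hence the post-$\tau_{\epsilon_0}$ evolution is a fresh SSML run started from a parameter with $\epsilon\le\epsilon_0$, which lies in $\mathcal{U}$ by the choice of $\epsilon_0$. Applying Eq.~(\ref{eq:ENref_bound}) with $\epsilon_{\rm targ}=\varepsilon$ bounds the conditional expected residual number of copies. Taking expectations then gives Eq.~(\ref{eq:noiseless_search_bound}).

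Two technical points need care here.
\begin{itemize}
\item Proposition~\ref{prop:local_refinement} counts copies until the first skeleton parameter with $\epsilon\le\varepsilon$. Since the parameter changes only at failures, the hitting time of $\mathcal{G}_\varepsilon$ coincides with the update time producing such a skeleton parameter. So $N_{\rm ref}$ equals $\tau_\varepsilon-\tau_{\epsilon_0}$, up to a one-step offset absorbed in the constant.
\item The counter $M_S$ at time $\tau_{\epsilon_0}$ is reset at the first failure, so the first post-hitting step size is arbitrary. That update may fail to contract, which costs at most one extra update and fits the $1/\rho$ accounting. The proposition also presumes the trajectory stays in $\mathcal{U}$ during refinement; the theorem's hypothesis, together with Assumption~\ref{assump:local_quadratic}, supplies this.
\end{itemize}

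\emph{Scaling.} Set $\varepsilon=c/M_H$ with a fixed constant $c$. Then $(1-\varepsilon)^{M_H}\ge e^{-c}(1-O(1/M_H))=\Theta(1)$, and the certification term $[1-(1-\varepsilon)^{M_H}]/[\varepsilon(1-\varepsilon)^{M_H}]\le M_H/(1-\varepsilon)^{M_H}$ is $O(M_H)$; this is Remark~\ref{rem:linear_vs_exp_cert}. The refinement term becomes
\[
\frac{M_H}{c\rho}\left\lceil\frac{\ln(\epsilon_0M_H/c)}{\abs{\ln(1-\kappa)}}\right\rceil=O\!\left(\rho^{-1}M_H\ln M_H\right),
\]
treating $\epsilon_0,\kappa,c$ as constants. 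Identifying $\mathsf{K}(d)\asymp1/\rho$ gives Eq.~(\ref{eq:noiseless_scaling_summary}). Here the $d$-dependence enters through $m=d^2-1$, via the spherical-cap measure $\mu_m(\gamma)$ in Lemma~\ref{lem:local_improvement_event}.

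The main obstacle is the rigorous use of the strong Markov property at $\tau_{\epsilon_0}$ in combination with Proposition~\ref{prop:local_refinement}. That proposition was stated in expectation for a fixed starting point. It must hold uniformly over starting points with $\epsilon\le\epsilon_0$, so that the conditional bound does not depend on $\mathbf{p}^{(\tau_{\epsilon_0})}$. I would obtain this by noting that $\rho$ and $\kappa$ in Lemma~\ref{lem:local_improvement_event} are uniform over all sufficiently small $\epsilon$, and by choosing $\epsilon_0$ below that threshold.
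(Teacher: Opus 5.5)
Your proof is correct and is essentially the paper's argument. The first bound is read off from Theorem~\ref{thm:decomposition_bound}; the second adds $\mathbb{E}[\tau_{\epsilon_0}]$ to the refinement cost of Proposition~\ref{prop:local_refinement}, started at the first entry into $\{\epsilon\le\epsilon_0\}$; and the scaling follows from $(1-\varepsilon)^{M_H}=\Theta(1)$ and $\ln(\epsilon_0/\varepsilon)=\Theta(\ln M_H)$ at $\varepsilon=\Theta(1/M_H)$. Your strong-Markov and uniformity remarks make explicit what the paper leaves implicit. Your caveats about a one-step offset and an arbitrary first step size are unnecessary: the parameter moves only on failures, so the counter is zero on entering step $\tau_{\epsilon_0}$, the first update follows the geometric law of Lemma~\ref{lem:local_improvement_event}, and $\tau_\varepsilon-\tau_{\epsilon_0}=N_{\rm ref}$ holds exactly, which matters because Eq.~(\ref{eq:noiseless_search_bound}) is an exact inequality with no constant to absorb an offset.
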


\begin{proof}---Eq.~(\ref{eq:noiseless_ET_general}) is Eq.~(\ref{eq:main_ET_bound}) in {\bf Theorem~\ref{thm:decomposition_bound}}. For Eq.~(\ref{eq:noiseless_search_bound}), consider the refinement stage starting from the first time the trajectory reaches $\epsilon \le \epsilon_{0}$. By {\bf Proposition~\ref{prop:local_refinement}} with $\epsilon_{\rm targ}=\varepsilon$, the expected number of samples
required to reach $\epsilon \le \varepsilon$ from $\epsilon \le \epsilon_{0}$ is bounded by the second term on the right-hand side of Eq.~(\ref{eq:noiseless_search_bound}). Adding the pre-refinement cost $\mathbb{E}[\tau_{\epsilon_{0}}]$ yields Eq.~(\ref{eq:noiseless_search_bound}). The scaling statement Eq.~(\ref{eq:noiseless_scaling_summary}) follows by choosing $\varepsilon=\Theta(1/M_H)$ and using $\ln(\epsilon_{0}/\varepsilon)=\Theta(\ln M_H)$.
\end{proof}

\begin{remark}[Dimension dependence]\label{rem:dimension_dependence_V}
In the bounds above, all dependence on the Hilbert-space dimension enters through the \emph{search} stage, encoded in the improvement probability $\rho$, hence $\mathsf{K}(d)\asymp 1/\rho$, and the pre-refinement time $\mathbb{E}[\tau_{\epsilon_{0}}]$. Once a uniform lower bound $F \ge 1-\varepsilon$ is available, the certification term is governed by universal run statistics {\rm ({\bf Theorem~\ref{thm:runs_mean_tail}})} and is essentially independent of $d$. This separation is the stopping-time manifestation of the ``search versus certificate'' architecture of SSML.
\end{remark}

%-------------------------------------------------------------------------------------------------------------------------------------------------------------------------------------------------------------------------------------
\section{Classification-noise analysis: feasibility and blow-up}\label{sec:noise_analysis}
%-------------------------------------------------------------------------------------------------------------------------------------------------------------------------------------------------------------------------------------

The halting rule in SSML is intrinsically a run-length certificate: the algorithm terminates only after observing $M_H$ consecutive ``success'' outcomes. This is precisely the feature that makes SSML operationally attractive in high-precision regimes, because the certificate is produced online from a single-bit-per-copy record. However, the same feature makes SSML particularly sensitive to classification noise---imperfections that flip the recorded success/failure labels~\cite{Angluin1988,Cover1999}. In this section, we quantify this sensitivity and establish a sharp feasibility picture: once the noise probability $q$ is such that $qM_H$ is not small, the stopping time becomes exponentially large and ``learning completion'' is effectively unattainable within any realistic shot budget.

A key point that we emphasize throughout is that the classification noise impacts SSML in two logically distinct ways: (i) search degradation through incorrect feedback updates, and (ii) certification infeasibility through the rarity of long success runs.

%Our main results in this section concern (ii): they are \emph{universal} in the sense that they hold regardless of the details of the search dynamics. In particular, even an idealized ``oracle'' that always chooses the best control cannot evade the exponential certification overhead induced by noisy labels.

%---------------------------------------------------------------------------------------------------------------------------------------------------------------------------------
\subsection{Noise model: a binary channel on the measurement record}\label{subsec:noise_model}

We formalize the classification noise as a corruption of the classical label produced by the measurement. Let $y_n \in \{s,u\}$ denote the true measurement outcome that would be obtained from the Born rule given the current control $\mathbf{p}^{(n)}$, and let $m_n \in \{s,u\}$ denote the recorded label that the feedback rule actually uses in Eq.~(\ref{eq:ssml_memory_update}) and Eq.~(\ref{eq:ssml_param_update}). We consider the two standard models~\cite{Cover1999}.

\medskip
\noindent [{\bf N.1}] {\em Binary-symmetric classification noise (BSC)}.
The recorded label is flipped with probability $q\in[0,1/2)$:
\begin{eqnarray}
\mathbb{P}(m_n\neq y_n)=q,\quad \mathbb{P}(m_n=y_n)=1-q,
\label{eq:BSC_def}
\end{eqnarray}
independently across steps and independently of all other randomness conditioned on the current true label. In this case, the effective observed success probability becomes
\begin{eqnarray}
\widetilde{F}(\mathbf{p}) &:=& \mathbb{P}(m_n=s\mid \mathbf{p}^{(n)}=\mathbf{p}) \nonumber \\
	&=& (1-q)F(\mathbf{p})+q\big(1-F(\mathbf{p})\big) \nonumber \\
	&=& q+(1-2q)F(\mathbf{p}).
\label{eq:effective_F_BSC}
\end{eqnarray}

\medskip
\noindent [{\bf N.2}] {\em False-negative noise (FN)}. A true success is recorded as a failure with probability $q \in [0,1)$, while failures are recorded faithfully:
\begin{eqnarray}
\mathbb{P}(m_n=u\mid y_n=s) &=& q, \nonumber \\
\mathbb{P}(m_n=s\mid y_n=s) &=& 1-q, \nonumber \\
\mathbb{P}(m_n=u\mid y_n=u) &=& 1.
\label{eq:FN_def}
\end{eqnarray}
Then,
\begin{eqnarray}
\widetilde{F}(\mathbf{p}) = (1-q) F(\mathbf{p}).
\label{eq:effective_F_FN}
\end{eqnarray}

Both models share the same structural implication that is most important for halting-time analysis: the observable success probability is uniformly bounded away from $1$ even when perfect control is possible. Indeed, since $F(\mathbf{p}) \le 1$,
\begin{eqnarray}
\widetilde{F}(\mathbf{p}) \le \widetilde{F}_{\max} := \sup_{\mathbf{p} \in \mathcal{P}}\widetilde{F}(\mathbf{p}) = 1-q
\label{eq:Fmax_noise}
\end{eqnarray}
for both [{\bf N.1}] and [{\bf N.2}]. This cap is the origin of the operational threshold $qM_H \gtrsim 1$.

\begin{remark}[Noise enters through the one-bit feedback channel]\label{rem:noise_channel}
The transformation $F(\mathbf{p}) \mapsto \widetilde{F}(\mathbf{p})$ is an explicit representation of the effective one-bit feedback channel from the quantum system to the controller. In the BSC model, the channel capacity vanishes at $q=1/2$~\cite{Cover1999}, where $\widetilde{F}(\mathbf{p})\equiv 1/2$ and the record becomes statistically independent of the state and the control. This is an identifiability threshold for learning. In contrast, our main focus below is an operational threshold that occurs already for $q \ll 1/2$ due to the run-length nature of certification.
\end{remark}

%---------------------------------------------------------------------------------------------------------------------------------------------------------------------------------
\subsection{Unavoidable certification overhead: a universal lower bound}\label{subsec:noise_unavoidable}

We now quantify the impact of label noise on the halting time. Recall that SSML halts when the recorded run length reaches $M_H$, i.e., at the stopping time $T = \inf\{ n \ge 1 \ : \ M_S^{(n)}=M_H \}$ with $M_S^{(n)}$ updated from $\{m_n\}$. The following statement is the formal expression of an intuitive but crucial fact: even if the search stage is perfect, noisy labels alone enforce an exponential certification overhead.

\begin{theorem}[Noise-induced blow-up]\label{thm:noise_blowup}
Assume the halting rule is unchanged and the recorded label is corrupted by either model [{\bf N.1}] or [{\bf N.2}] with parameter $q > 0$. Let $X_{M_H}(p)$ be the waiting time for $M_H$ consecutive successes in an i.i.d. Bernoulli$(p)$ process. Then the SSML halting time satisfies the universal lower bound
\begin{eqnarray}
\mathbb{E}[T] \ge \mathbb{E}\big[X_{M_H}(1-q)\big] = \frac{1-(1-q)^{M_H}}{q(1-q)^{M_H}}.
\label{eq:ET_noise_lower_proof}
\end{eqnarray}
In particular, for $qM_H \gg 1$,
\begin{eqnarray}
\mathbb{E}[T] \gtrsim \frac{1}{q}\,e^{qM_H}.
\label{eq:ET_noise_exp_proof}
\end{eqnarray}
\end{theorem}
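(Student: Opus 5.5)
The plan is to reverse the coupling used in the proof of Theorem~\ref{thm:decomposition_bound}. There, a lower bound on the success probability let the SSML record dominate an i.i.d.\ process. Here the cap in Eq.~(\ref{eq:Fmax_noise}) gives an \emph{upper} bound on the recorded success probability. Under both [\textbf{N.1}] and [\textbf{N.2}] we have $\mathbb{P}(m_n=s\mid\mathcal{F}_{n-1})=\widetilde{F}(\mathbf{p}^{(n)})\le 1-q$ at every step, whatever the adaptive choice of $\mathbf{p}^{(n)}$. This holds because the channel acts independently given the true label, and the true label is Born-distributed given $\mathbf{p}^{(n)}$, exactly as in Eq.~(\ref{eq:conditional_outcome}).

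First I construct the coupling. Let $\{U_n\}$ be i.i.d.\ uniform on $[0,1]$, independent of the directions $\{\mathbf{r}_n\}$ and the initial control. Generate the recorded label by
\[
m_n=s \iff U_n\le \widetilde{F}\big(\mathbf{p}^{(n)}\big).
\]
This reproduces the law of the noisy SSML record, since $\mathbf{p}^{(n)}$ is $\mathcal{F}_{n-1}$-measurable. Set $B_n:=\mathbb{1}_{\{U_n\le 1-q\}}$, an i.i.d.\ Bernoulli$(1-q)$ sequence. Since $\widetilde{F}\le 1-q$, we get $\{m_n=s\}\subseteq\{B_n=1\}$ pathwise. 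Hence any window of $M_H$ consecutive recorded successes is also a window of $M_H$ consecutive ones in $\{B_n\}$. The counter $M_S^{(n)}$ reaches $M_H$ exactly when the last $M_H$ recorded labels are all $s$, so $T\ge X_{M_H}(1-q)$ pathwise under the coupling. Thus $T$ stochastically dominates $X_{M_H}(1-q)$.

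Next I take expectations and use Eq.~(\ref{eq:main_Ek}) of Theorem~\ref{thm:runs_mean_tail} with $p=1-q$ and $k=M_H$. This gives the closed form in Eq.~(\ref{eq:ET_noise_lower_proof}). If $\mathbb{E}[T]=\infty$ the inequality is trivial. For the asymptotic form, write
\[
\frac{1-(1-q)^{M_H}}{q(1-q)^{M_H}}=\frac{(1-q)^{-M_H}-1}{q},
\]
and use $(1-q)^{-M_H}=e^{-M_H\ln(1-q)}\ge e^{qM_H}$, which follows from $-\ln(1-q)\ge q$. Then for $qM_H\gg1$ the $-1$ is negligible, giving $\mathbb{E}[T]\gtrsim q^{-1}e^{qM_H}$.

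The only delicate step is the coupling, because the control is adaptive and depends on past recorded labels. To handle it I need to check two things. First, the labels generated from the single uniform $U_n$ have the correct conditional law given $\mathcal{F}_{n-1}$. Second, $U_n$ is independent of $\mathcal{F}_{n-1}$ and of the perturbation draws used to form $\mathbf{p}^{(n)}$. Together these ensure the coupled process has the same joint law as noisy SSML, including in the BSC case, where flips of true failures into recorded successes are absorbed into $\widetilde{F}$. Everything else is direct substitution into earlier results.
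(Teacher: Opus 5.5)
Your proposal is correct and follows the paper's route: compare the noisy record with the best-case i.i.d.\ Bernoulli$(1-q)$ process via the cap $\widetilde{F}_{\max}=1-q$, then apply Theorem~\ref{thm:runs_mean_tail}. The difference is rigor. Your reversed uniform coupling actually proves the pathwise domination $T\ge X_{M_H}(1-q)$ for adaptive controls, which the paper only asserts as a monotonicity fact, and it does so without using reachability. Your use of $-\ln(1-q)\ge q$ also gives the non-asymptotic bound $\mathbb{E}[T]\ge (e^{qM_H}-1)/q$, instead of relying on the approximation $(1-q)^{M_H}\approx e^{-qM_H}$.
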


\begin{proof}---Consider an idealized ``best-case'' completion scenario in which the controller is always at a parameter $\mathbf{p}$, satisfying $F(\mathbf{p})=1$ ({\bf Assumption~\ref{assump:ssml_reachability}}). Under either noise model [{\bf N.1}] or [{\bf N.2}], the recorded outcome sequence is then an i.i.d. Bernoulli process with success probability $1-q$, because the only remaining source of failure labels is label corruption. In this best-case scenario, the SSML halting time is exactly $X_{M_H}(1-q)$.

For any actual SSML trajectory (and, more generally, for any learning strategy driven by the same noisy record), the recorded success probability at each step is at most $\widetilde{F}_{\max}=1-q$ by Eq.~(\ref{eq:Fmax_noise}). Since the waiting time to obtain a run of $M_H$ consecutive successes is monotone decreasing in the per-trial success probability, the best-case i.i.d. process with success probability $1-q$ minimizes the expected halting time among all processes with success probability bounded by $1-q$. Therefore, $\mathbb{E}[T]$ cannot be smaller than $\mathbb{E}[X_{M_H}(1-q)]$, which equals Eq.~(\ref{eq:ET_noise_lower_proof}) by {\bf Theorem~\ref{thm:runs_mean_tail}}. The asymptotic form Eq.~(\ref{eq:ET_noise_exp_proof}) follows from $(1-q)^{M_H}\approx e^{-qM_H}$ for $q \ll 1$.
\end{proof}

\begin{remark}[Tightness]\label{rem:tightness_noise}
The lower bound in Eq.~(\ref{eq:ET_noise_lower_proof}) is tight in the sense that it is achieved (up to the irreducible noise) once the search dynamics reaches a regime where $F(\mathbf{p}) \approx 1$ and remains there. Thus, classification noise creates a certification bottleneck that persists even for an idealized, perfectly controllable learner.
\end{remark}

%---------------------------------------------------------------------------------------------------------------------------------------------------------------------------------
\subsection{Feasibility regimes and an operational threshold}\label{subsec:noise_feasibility}

{\bf Theorem~\ref{thm:noise_blowup}} shows that the dimensionless product $q M_H$ governs the fundamental overhead. To make this explicit, let us define
\begin{eqnarray}
p_{\max}:=1-q.
\label{eq:pmax_def}
\end{eqnarray}
Then,
\begin{eqnarray}
\mathbb{E}\big[X_{M_H}(p_{\max})\big] = \frac{1-p_{\max}^{M_H}}{(1-p_{\max})p_{\max}^{M_H}} \approx \frac{e^{qM_H}-1}{q}.
%	&=& \frac{1-(1-q)^{M_H}}{q(1-q)^{M_H}} \nonumber \\
\label{eq:EX_noise_scaling}
\end{eqnarray}

\smallskip
\noindent {\bf Regime I (feasible, linear).} If $qM_H \ll 1$, then $(1-q)^{M_H} \approx 1-qM_H$ and
\begin{eqnarray}
\mathbb{E}\big[X_{M_H}(1-q)\big] \approx M_H.
\label{eq:noise_linear_regime}
\end{eqnarray}
In this regime, the label noise does not qualitatively change the scaling of the halting time.

\smallskip
\noindent {\bf Regime II (infeasible, exponential).} If $qM_H \gg 1$, then $(1-q)^{M_H} \approx e^{-qM_H}$ is exponentially small and
\begin{eqnarray}
\mathbb{E}\!\big[X_{M_H}(1-q)\big] \approx \frac{1}{q} e^{qM_H}.
\label{eq:noise_exp_regime}
\end{eqnarray}
In this regime, increasing $M_H$ strengthens the certificate only at the price of an exponential increase in the required number of copies.

This motivates the following operational interpretation:

\begin{definition}[Operational threshold]\label{def:operational_threshold}
We say that SSML operates in a noise-feasible regime if $qM_H \lesssim 1$, and in a noise-infeasible regime if $qM_H \gtrsim 1$, where $\lesssim$ and $\gtrsim$ hide modest numerical constants (e.g., of order unity) that depend on the acceptable overhead factor in the expected or high-confidence halting time.
\end{definition}

The threshold in {\bf Definition~\ref{def:operational_threshold}} is ``sharp'' in the sense that the functional dependence is exponential in $qM_H$ rather than polynomial.

\begin{remark}[High-confidence feasibility]\label{rem:high_conf_feasibility}
A similar statement holds at the level of tails. Using the block bound in {\bf Theorem~\ref{thm:runs_mean_tail}} for $X_{M_H}(p_{\max})$,
\begin{eqnarray}
\mathbb{P}\bigl(X_{M_H}(p_{\max})>N \bigr) \le \bigl( 1-p_{\max}^{M_H} \bigr)^{\lfloor N/M_H \rfloor},
\label{eq:noise_tail_block}
\end{eqnarray}
one finds that achieving $\mathbb{P}(X_{M_H}(p_{\max})\le N) \ge 1-\delta$ requires, up to constants,
\begin{eqnarray}
N \gtrsim \frac{M_H}{p_{\max}^{M_H}}\ln\!\frac{1}{\delta} \approx M_H e^{qM_H} \ln{\frac{1}{\delta}}.
\label{eq:noise_quantile_scaling}
\end{eqnarray}
Thus the exponential dependence on $qM_H$ persists for high-confidence halting, not only for the mean.
\end{remark}

%---------------------------------------------------------------------------------------------------------------------------------------------------------------------------------
\subsection{Noisy certification semantics and a certificate ceiling}\label{subsec:noise_certificate_semantics}

In the noiseless case, the halting event is a direct certificate on $F(\mathbf{p}_{\rm est})$ because $\mathbb{P}(\text{$M_H$ successes}\mid \mathbf{p})=F(\mathbf{p})^{M_H}$ ({\bf Lemma~\ref{lem:certificate_strength}}). Under label noise, the same logic applies with the effective success probability $\widetilde{F}(\mathbf{p})$.

\begin{lemma}[Certificate strength under BSC noise]\label{lem:noise_certificate}
Assume the BSC model {\rm [{\bf N.1}]} with flip probability $q \in [0,1/2)$. For a fixed control $\mathbf{p}$, the probability of observing $M_H$ recorded successes consecutively is
\begin{widetext}
\begin{eqnarray}
\mathbb{P}\big(\text{$M_H$ recorded successes}\ \big|\ \mathbf{p}\big) = \widetilde{F}(\mathbf{p})^{M_H} = \Big(q+(1-2q)F(\mathbf{p})\Big)^{M_H} = \Big(1-q-(1-2q)\epsilon(\mathbf{p})\Big)^{M_H}.
\label{eq:noise_cert_prob}
\end{eqnarray}
Consequently, for any $\epsilon_{0}\in(0,1)$,
\begin{eqnarray}
\epsilon(\mathbf{p})\ge \epsilon_{0} \ \Longrightarrow \ \mathbb{P}\big(\text{$M_H$ recorded successes}\ \big|\ \mathbf{p}\big) \le \bigl(1-q-(1-2q)\epsilon_{0}\bigr)^{M_H}.
\label{eq:noise_cert_tail}
\end{eqnarray}
\end{widetext}
\end{lemma}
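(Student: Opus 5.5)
The plan is to mirror the proof of {\bf Lemma~\ref{lem:certificate_strength}}, with the true Born-rule success probability $F(\mathbf{p})$ replaced by the effective recorded success probability $\widetilde{F}(\mathbf{p})$ of Eq.~(\ref{eq:effective_F_BSC}). Fix $\mathbf{p}$ and consider $M_H$ consecutive rounds under the frozen control $\hat{U}(\mathbf{p})$; freezing on recorded successes ({\bf Remark~\ref{rem:ssml_freezing}}) guarantees the control does not move during such a run.

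\emph{Step one: independence of the recorded labels.} In each round the true outcome $y_n$ is Bernoulli$(F(\mathbf{p}))$ by the Born rule, with fresh copies giving independent outcomes across rounds. Under [{\bf N.1}] the flip is independent across steps and independent of everything else given $y_n$. Hence the pairs $(y_n,m_n)$ are i.i.d., and so the recorded labels $m_n$ are i.i.d. Bernoulli. Conditioning on $y_n$ gives the marginal
\begin{eqnarray}
\mathbb{P}(m_n=s\mid\mathbf{p}) = (1-q)F(\mathbf{p})+q\bigl(1-F(\mathbf{p})\bigr) = \widetilde{F}(\mathbf{p}).
\end{eqnarray}
This is exactly the computation already done in Eq.~(\ref{eq:effective_F_BSC}).

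\emph{Step two: the run probability.} Multiplying over the $M_H$ independent rounds yields $\widetilde{F}(\mathbf{p})^{M_H}$. Substituting $F=1-\epsilon$ gives $q+(1-2q)(1-\epsilon)=1-q-(1-2q)\epsilon$, which produces the three equivalent forms in Eq.~(\ref{eq:noise_cert_prob}).

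\emph{Step three: the implication Eq.~(\ref{eq:noise_cert_tail}).} This follows from monotonicity. Since $q<1/2$, the quantity $1-2q$ is positive, so $1-q-(1-2q)\epsilon$ is nonincreasing in $\epsilon$. It also stays nonnegative: it equals $q\ge0$ at $\epsilon=1$. Because $x\mapsto x^{M_H}$ is increasing on $[0,\infty)$, the hypothesis $\epsilon(\mathbf{p})\ge\epsilon_0$ gives the stated bound.

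The only point requiring care, and the expected obstacle, is justifying the independence in step one. The recorded labels must remain independent, with the product form intact, even though noise is applied per step. I would handle this by noting that, conditioned on the fixed $\mathbf{p}$, each round uses a fresh state copy and an independent flip variable. This makes the joint law of $(m_1,\dots,m_{M_H})$ a product measure. Everything else is algebra.
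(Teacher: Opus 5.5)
Your proposal is correct and follows the paper's route. The paper gives no separate argument beyond noting that the logic of {\bf Lemma~\ref{lem:certificate_strength}} carries over with $F(\mathbf{p})$ replaced by the recorded success probability $\widetilde{F}(\mathbf{p})$ of Eq.~(\ref{eq:effective_F_BSC}), which is exactly your i.i.d.-labels-plus-product argument. Your monotonicity check for Eq.~(\ref{eq:noise_cert_tail}) uses $1-2q>0$ and the nonnegativity of the base, $1-q-(1-2q)\epsilon\ge q\ge 0$, and so fills in a step the paper leaves implicit.
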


{\bf Lemma~\ref{lem:noise_certificate}} shows that the halting event remains a sequential hypothesis test, but its strength is fundamentally weakened by $q$. In particular, if one aims to certify $\epsilon(\mathbf{p}_{\rm est})<\epsilon_{0}$ at significance $\delta$, a sufficient condition is
\begin{eqnarray}
\bigl( 1-q-(1-2q)\epsilon_{0} \bigr)^{M_H} \le \delta.
\label{eq:noise_cert_condition}
\end{eqnarray}
Solving Eq.~(\ref{eq:noise_cert_condition}) for $\epsilon_{0}$ yields an explicit noise-aware certificate scale
\begin{eqnarray}
\epsilon_{0} \ge \epsilon^{(q)}_{\rm cert}(M_H,\delta) := \frac{1-q-\delta^{1/M_H}}{1-2q},
\label{eq:noise_eps_cert_exact}
\end{eqnarray}
with the small-$(q,\delta)$ approximation
\begin{eqnarray}
\epsilon^{(q)}_{\rm cert}(M_H,\delta) \approx \frac{1}{1-2q}\Big(\frac{\ln(1/\delta)}{M_H}-q\Big).
\label{eq:noise_eps_cert_approx}
\end{eqnarray}

Here, we note that Eq.~(\ref{eq:noise_eps_cert_approx}) highlights a central operational implication: A run-length certificate cannot meaningfully certify infidelity far below the label-noise level. Indeed, once $q \gtrsim \ln(1/\delta)/M_H$, the right-hand side of Eq.~(\ref{eq:noise_eps_cert_approx}) becomes non-positive, indicating that the halting event no longer provides a nontrivial guarantee beyond what is already limited by label noise (unless one increases $M_H$ further, which triggers the exponential blow-up in Sec.~\ref{subsec:noise_feasibility}).

\begin{remark}[Certificate ceiling and the $1/q$ run-length limit]\label{rem:certificate_ceiling}
The cap $\widetilde{F}_{\max}=1-q$ implies that runs longer than $\Theta(1/q)$ are exponentially suppressed. This provides a principled explanation of the empirically observed fact that the consecutive-success counter $M_S$ cannot reliably exceed the signal-to-noise ratio scale $\mathrm{SNR} \sim 1/q$ in single-shot implementations: the certificate itself becomes noise-limited before the underlying state fidelity does.
\end{remark}

%---------------------------------------------------------------------------------------------------------------------------------------------------------------------------------
\subsection{Interaction with search dynamics (qualitative remarks)}\label{subsec:noise_search_effect}

The results above are certification-limited and therefore universal. In practice, classification noise also perturbs the \emph{search} stage because SSML updates are triggered by the recorded label:
\begin{itemize}
\item[(i)] False negatives induce spurious resets $M_S\leftarrow 0$ and unnecessary parameter updates, causing the search trajectory to diffuse even near an optimum.
\item[(ii)] False positives (in the BSC model) suppress needed updates and may artificially increase $M_S$, potentially shrinking the step size $\omega=\alpha(M_S+1)^{-\beta}$ prematurely and slowing exploration.
\end{itemize}

These effects are implementation-dependent, and a detailed quantitative analysis typically requires additional assumptions on the control landscape and on the distribution of update directions. Nevertheless, regardless of such details, the certification bottleneck established in {\bf Theorem~\ref{thm:noise_blowup}} remains: even an ideal search cannot overcome the exponential rarity of
long success runs once $qM_H$ is not small.

\begin{remark}[Implication for algorithm design]\label{rem:design_implication}
The analysis suggests a clear design rule for SSML with run-length halting: for a given classification-noise level $q$, one should not choose $M_H$ much larger than $\Theta(1/q)$ unless an
exponentially large shot budget is acceptable. Alternatively, one may modify the halting rule (e.g., windowed success rates, sequential likelihood-ratio tests, or majority-vote verification steps) so that certification uses redundancy in a way that is less sensitive to single-bit flips.
\end{remark}

%-------------------------------------------------------------------------------------------------------------------------------------------------------------------------------------------------------------------------------------
\section{Numerical simulations and empirical validation}\label{sec:numerics}
%-------------------------------------------------------------------------------------------------------------------------------------------------------------------------------------------------------------------------------------

This section validates the main stopping-time predictions of this work by numerical simulation. The goal is not to provide empirical evidence for the two structural claims that underlie our analysis:
\begin{itemize}
\item[1.] In the noiseless regime, SSML exhibits an essentially $O(N^{-1})$ accuracy law once the halting rule is viewed as an intrinsic sequential certificate.
\item[2.] Under classification noise with flip probability $q$, the run-length certificate induces a universal exponential blow-up in the halting time once $qM_H \gtrsim 1$.
\end{itemize}

%---------------------------------------------------------------------------------------------------------------------------------------------------------------------------------
\subsection{Simulation model and SSML implementation}\label{subsec:numerics_model}

{\em Unknown states and fiducial measurement.}---For a given Hilbert-space dimension $d$, we sample unknown pure states $\ket{\psi_\tau}$ from the Haar measure~\cite{Dankert2009,Harrow2009,Brandao2016}. We fix the fiducial state as $\ket{f}=\ket{0}$ in the computational basis and use the multi-outcome projective measurement $\{\ketbra{j}{j}\}_{j=0}^{d-1}$ after applying the current learner unitary $\hat{U}$. The ``success'' outcome corresponds to $j=0$ and any $j\ge 1$ is treated as ``failure'' (with a label that records which failure outcome occurred).

{\em High-dimensional SSML update rule.}---In $d>2$, we implement the standard SSML extension in which, upon a failure outcome $j=k\in\{1,\dots,d-1\}$, the feedback applies a random SU($2$) perturbation on the two-dimensional subspace $\mathrm{span}\{\ket{0}, \ket{k}\}$ while acting trivially on the orthogonal complement. Concretely, letting $R_{(0,k)}(\omega)\in \mathrm{SU}(2)$ be a random rotation of size $\omega$, we embed it into $\mathbb{C}^{d \times d}$ as a unitary $\hat{V}_{(0,k)}(\omega)$ and update
\begin{eqnarray}
\hat{U} \leftarrow \hat{V}_{(0,k)}(\omega)\,\hat{U},
\label{eq:num_update}
\end{eqnarray}
with $\omega = \alpha (M_S+1)^{-\beta}$ and $M_S \leftarrow 0$, while on success we keep $\hat{U}$ unchanged and increment $M_S\leftarrow M_S+1$. Here, we use the hyperparameters $\alpha=0.3$ and $\beta=1/2$ throughout.

{\em Stopping time and error metric.}---A run terminates at the stopping time $T=\inf\{n\ge 1:\ M_S^{(n)}=M_H\}$. The state estimate is $\ket{\psi_{\rm est}}=\hat{U}^\dagger\ket{0}$ and we evaluate the infidelity $\epsilon_T = 1-\abs{\braket{\psi_{\rm est}}{\psi_\tau}}^2$ at the stopping time.

%---------------------------------------------------------------------------------------------------------------------------------------------------------------------------------
\subsection{Noiseless validation: learning probability and $O(N^{-1})$ scaling}\label{subsec:numerics_noiseless}

{\em Learning probability.}---For fixed $M_H$ we estimate the learning probability $P(N)=\Pr[T \le N]$ by repeating independent runs. Fig.~\ref{fig:num_learning_prob} shows that $P(N)$ is well approximated by an exponential CDF of the form
\begin{eqnarray}
P(N) \approx 1-\exp(-N/N_c),
\label{eq:num_PN_fit}
\end{eqnarray}
with a characteristic scale $N_c$ comparable to $\mathbb{E}[T]$. This corroborates the stopping-time viewpoint that SSML completion is governed by a well-defined hitting-time distribution, not by rare pathological events.

\begin{figure}[t]
\centering
\includegraphics[width=1.00\linewidth]{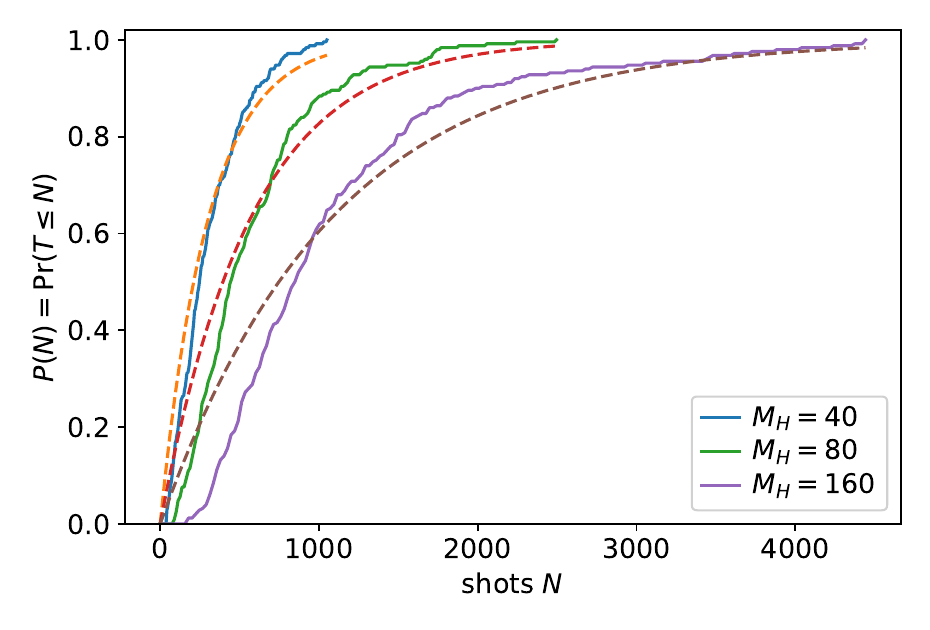}
\caption{The empirical learning probability $P(N)=\Pr[T \le N]$ for $d=2$ and several halting thresholds $M_H$ (solid), together with the fit $1-\exp(-N/N_c)$ (dashed).}
\label{fig:num_learning_prob}
\end{figure}

{\em Accuracy scaling and dimension dependence.}---To validate the predicted near-optimal accuracy, we sweep $M_H$ and record the pair $(\mathbb{E}[T],\mathbb{E}[\epsilon_T])$. Fig.~\ref{fig:num_infid_scaling} shows an approximately linear relation $\mathbb{E}[\epsilon_T] \propto 1/\mathbb{E}[T]$ on a log-log scale, consistent with the $O(N^{-1})$ law. The separation of roles emphasized in our analysis is also visible: the dimension $d$ mainly affects the prefactor (through the search geometry), while the certificate mechanism yields the universal $1/N$ dependence once halting is feasible.

\begin{figure}[t]
\centering
\includegraphics[width=1.00\linewidth]{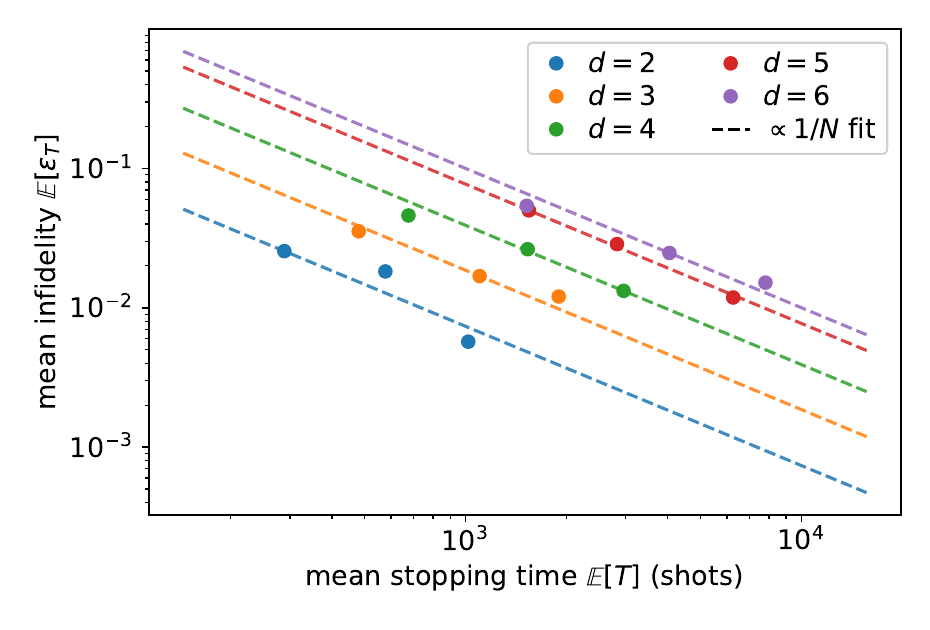}
\caption{Noiseless SSML accuracy scaling.  Each marker corresponds to a halting threshold $M_H\in\{40,80,160\}$, and the dashed line indicates a reference $1/N$ slope.  Increasing $d$ shifts the curve mainly through a dimension-dependent prefactor, consistent with the factor $\mathsf{K}(d)$ in our bounds.}
\label{fig:num_infid_scaling}
\end{figure}

%---------------------------------------------------------------------------------------------------------------------------------------------------------------------------------
\subsection{Classification-noise validation: feasibility threshold and exponential blow-up}\label{subsec:numerics_noise}

Our analysis in Sec.~\ref{sec:noise_analysis} shows that classification noise induces a universal certification bottleneck. To validate this effect numerically without conflating it with search degradation, we simulate the best-case certification process: an i.i.d. Bernoulli record with success probability $1-q$, corresponding to perfect control ($F=1$) but noisy labels. In this setting, the SSML halting time is exactly the waiting time $X_{M_H}(1-q)$ for $M_H$ consecutive recorded successes.

Fig.~\ref{fig:num_noise_collapse} plots the scaled mean $q \mathbb{E}[X_{M_H}(1-q)]$ against the dimensionless noise-load $qM_H$. The simulation collapses across different $q$, agreeing with the theoretical prediction $q \mathbb{E}[X_{M_H}(1-q)] \approx e^{qM_H}-1$. This numerically confirms the sharp operational threshold: once $qM_H \gtrsim 1$, the expected halting time grows exponentially, rendering completion effectively unattainable for any fixed shot budget~\cite{Lee2021,Drekic2021}.

\begin{figure}[t]
\centering
\includegraphics[width=1.00\linewidth]{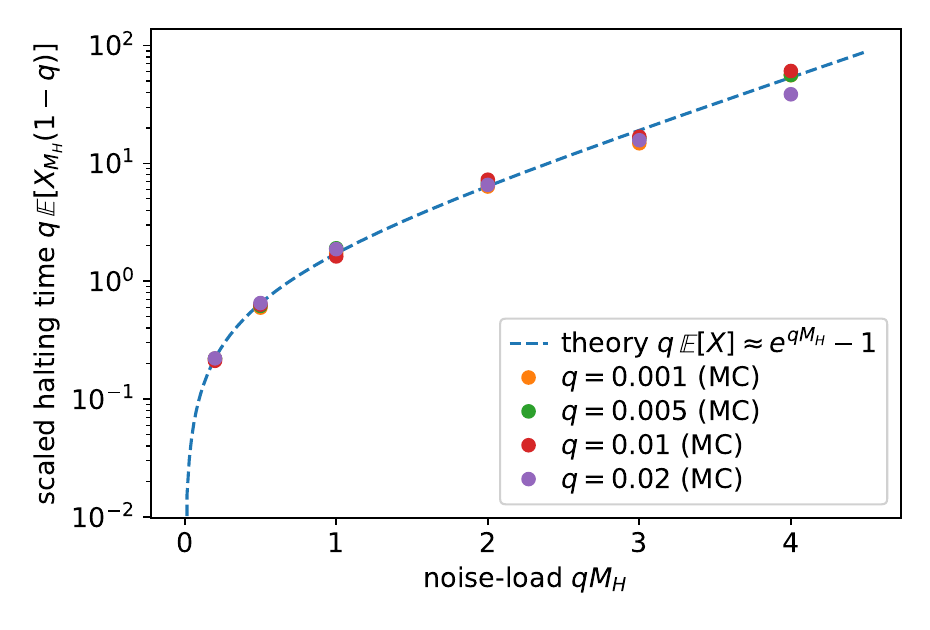}
\caption{The universal noise-induced blow-up of the run-length certificate. Markers: Monte-Carlo estimates of $q\mathbb{E}[X_{M_H}(1-q)]$ for several noise rates $q$. Dashed curve: theoretical prediction $e^{qM_H}-1$ (valid for small $q$ and accurate as a scaling law more generally). The collapse highlights $qM_H$ as the controlling dimensionless parameter and illustrates the sharp feasibility threshold at $qM_H=O(1)$.}
\label{fig:num_noise_collapse}
\end{figure}

%-------------------------------------------------------------------------------------------------------------------------------------------------------------------------------------------------------------------------------------
\section{Information-theoretic and physical interpretation}\label{sec:info_interpretation}
%-------------------------------------------------------------------------------------------------------------------------------------------------------------------------------------------------------------------------------------

In this section, we step back and interpret the results as statements about how information is acquired from quantum systems under strict physical constraints. Our goal is not merely to ``explain'' the obtained scalings, but to articulate why SSML can be viewed as a \emph{minimal model of adaptive information acquisition}, where the halting condition plays the role of an intrinsic sequential certification test and where imperfections in the classical feedback channel induce a genuine information-theoretic bottleneck.

%---------------------------------------------------------------------------------------------------------------------------------------------------------------------------------
\subsection{SSML as a minimal model of adaptive information acquisition}\label{subsec:info_minimal_model}

At a high level, SSML is a sequential interaction protocol between a quantum source and a classical controller. At each step $n$, the controller chooses an action $\mathbf{p}^{(n)}$ (equivalently, a unitary
$\hat{U}(\mathbf{p}^{(n)})$), observes a single binary outcome $m_{n}\in\{s,u\}$ from the two-outcome test $\{\hat{M}_{f}, \hat{M}_{f^{\perp}}\}$, and updates $\mathbf{p}$ according to a rule that uses only this bit and a minimal internal memory $M_{S}^{(n)}$. This creates a particularly sparse information flow:
\begin{center}
$1$ unknown copy $\rightarrow$ $1$ classical bit $\rightarrow$ $1$ update decision.
\end{center}

In comparison, the standard tomography extracts a multi-outcome record and then performs extensive post-processing~\cite{Hradil1997,James2001}, whereas many adaptive schemes still rely on nontrivial intermediate estimation (e.g., Bayesian filtering~\cite{Huszar2012,Granade2016}). SSML, in contrast, is intentionally austere: it uses (i) a one-bit measurement record, (ii) a tiny memory ($M_{S}$), and (iii) a purely local, failure-triggered random perturbation. From an information-science viewpoint, this austerity is not a limitation but the point: it isolates the minimal ingredients required for near-optimal learning under repeated-copy access~\cite{Massar1995,Gill2000,Bagan2006,Lee2021}.

A useful perspective is to regard the measurement outcome as the output of an adaptive binary channel whose parameter is the success probability $F(\mathbf{p})$ in Eq.~(\ref{eq:ssml_F_eps}). SSML repeatedly queries this channel at adaptively chosen $\mathbf{p}$, attempting to steer $F(\mathbf{p})$ toward unity. In this analogy, the unknown quantum state is not a parameter-observed directly; it is latent and enters only through the mapping
\begin{eqnarray}
\ket{\psi_\tau}\ \mapsto\ F(\mathbf{p})=\abs{\bra{f}\hat{U}(\mathbf{p})\ket{\psi_\tau}}^{2},
\label{eq:info_latent_mapping}
\end{eqnarray}
and quantum mechanics constrains which such channels are physically realizable.

\begin{remark}[Minimality in practice]\label{rem:info_minimality}
SSML has three strong ``minimality'' features that are often absent simultaneously in adaptive estimation: (i) no quantum memory or collective measurement across copies, (ii) no full intermediate estimator of $\ket{\psi_\tau}$ is maintained during learning, and (iii) the algorithm's internal state is effectively the pair $(\mathbf{p},M_S)$. This makes SSML a clean testbed for separating what is genuinely quantum-limited (copy complexity, no-cloning) from what is algorithmic (search geometry) and what is classical (feedback reliability).
\end{remark}

%---------------------------------------------------------------------------------------------------------------------------------------------------------------------------------
\subsection{Halting as sequential certification: an evidence budget}\label{subsec:info_halting_certificate}

The central conceptual step of this work is to interpret the halting condition $M_{S}=M_H$ as an intrinsic sequential certificate. This is not an optional ``after-the-fact'' verification; it is the operational definition of completion.

In the noiseless case, once the algorithm is near a good control $\mathbf{p}$, success events freeze the control, and the final run of $M_H$ successes is generated under a fixed $\mathbf{p}_{\rm est}$. Conditional on $\mathbf{p}_{\rm est}$, the halting event is therefore a statement about $F(\mathbf{p}_{\rm est})$. {\bf Lemma~\ref{lem:certificate_strength}} made this explicit by showing
\begin{eqnarray}
\mathbb{P}\bigl(\text{$M_H$ consecutive successes} \mid \mathbf{p}\bigr) = F(\mathbf{p})^{M_H}.
\label{eq:info_cert_prob}
\end{eqnarray}
Thus, for any target infidelity scale $\epsilon_{0}$, the certificate error exponent is $M_H$:
\begin{eqnarray}
\epsilon(\mathbf{p})\ge \epsilon_{0} \ \Longrightarrow \ \mathbb{P}\left(\text{halt at $\mathbf{p}$} \mid \mathbf{p}\right) \le (1-\epsilon_{0})^{M_H}.
\label{eq:info_cert_exponent}
\end{eqnarray}
In other words, $M_H$ is an evidence budget measured in consecutive successful ``yes'' answers. This is why the stopping-time viewpoint is natural: the protocol continues to spend copies until the accumulated evidence (a success run) reaches a prescribed threshold.

This viewpoint also clarifies why the search--certification decomposition of Sec.~\ref{subsec:noiseless_decomposition} and Sec.~\ref{sec:noiseless_bounds} is more than a technical convenience. The search stage is tasked with bringing the system into a regime where the certificate is operationally meaningful; the certification stage is then governed by universal run statistics ({\bf Theorem~\ref{thm:runs_mean_tail}}) and is essentially independent of $d$ once a lower bound on $F$ is available. In particular, the ``linear-versus-exponential'' dichotomy of the certification cost (Eq.~(\ref{eq:cert_scaling_recall_V})) provides a crisp design principle: $\epsilon(\mathbf{p}_{\rm est})M_H = O(1)$ is the regime where halting is both meaningful and feasible.

\begin{remark}[Sequential testing without explicit likelihood ratios]\label{rem:info_SPRT}
The halting rule is structurally analogous to a sequential test: it uses a streaming record and stops once the evidence crosses a threshold.  Unlike Wald-type likelihood-ratio tests, SSML uses a fixed summary statistic ($M_S$) whose update requires negligible computation.  The price of this simplicity is that the certificate is tied to run statistics and therefore becomes highly sensitive to label noise (Sec.~\ref{sec:noise_analysis}).
\end{remark}

%---------------------------------------------------------------------------------------------------------------------------------------------------------------------------------
\subsection{Calibration against quantum limits: why $O(N^{-1})$ is natural}\label{subsec:info_quantum_limits}

A recurring theme in the SSML literature is that the observed learning accuracy is (nearly) optimal, with infidelity scaling close to $O(N^{-1})$ as a function of the number $N$ of consumed copies~\cite{Lee2018,Lee2021}. This is not an accident: for pure-state estimation, $O(N^{-1})$ is the maximum statistical accuracy that can be expected on general quantum-estimation grounds, and it can be understood as a manifestation of the no-cloning principle~\cite{Massar1995,Gill2000,Bagan2006,Wootters1982,Dieks1982}. In particular, it is emphasized in the original SSML proposal that $O(N^{-1})$ represents the ultimate accuracy limit imposed by quantum-state estimation theory, or equivalently by no cloning~\footnote{The statement is standard: if one could estimate an arbitrary unknown pure state with infidelity scaling asymptotically faster than $1/N$, then one could turn $N$ copies into a classical description accurate enough to reproduce (clone) the state at a rate incompatible with universal quantum cloning limits (See the discussion in Sec.~I of Ref.~\cite{Lee2018}; see also Refs.~\cite{Buvzek1996,Werner1998}).}.

Our results provide a complementary and arguably more operational explanation of why SSML sits naturally at this limit.

\smallskip
\noindent (i) {\em Certification inherently costs $O(1/\epsilon)$.}---In the high-fidelity regime, the monitored run length is geometric and satisfies $\mathbb{E}[M_S\mid \epsilon]\approx \epsilon^{-1}$ (Sec.~\ref{subsec:noiseless_MS_proxy}). Thus, to \emph{see} evidence consistent with infidelity $\epsilon$, SSML must spend on the order of $1/\epsilon$ copies simply to accumulate a long run of successes. This is a sequential version of a familiar information constraint: very small error probabilities can only be resolved after many trials.

\smallskip
\noindent (ii) {\em Search converts geometry (dimension) into a prefactor.}---The search stage determines how quickly the learner reaches a regime where $\epsilon=O(1/M_H)$. This stage carries the dependence on $d$ (through the control landscape and effective parameter dimension). Once this dependence is extracted into a factor $\mathsf{K}(d)$, the remaining scaling is essentially the same as in a one-parameter Bernoulli certification problem~\cite{Drekic2021}.

\smallskip
\noindent (iii) {\em Eliminating $M_H$ yields $O(N^{-1})$ with a transparent meaning.}---In Sec.~\ref{subsec:noiseless_combine}, we showed how the certificate scale $\epsilon_{\rm cert}(M_H,\delta)\approx \ln(1/\delta)/M_H$ and the stopping-time bounds together imply
\begin{eqnarray}
\epsilon(\mathbf{p}_{\rm est}) = O\Big(\mathsf{K}(d)\frac{\ln(1/\delta)}{N}\Big),
\label{eq:info_eps_vs_N_recall}
\end{eqnarray}
up to modest logarithmic factors and model-dependent prefactors. From the present interpretative standpoint, Eq.~(\ref{eq:info_eps_vs_N_recall}) says: SSML reaches the quantum-limited $1/N$ accuracy because its halting rule enforces a certificate whose evidence budget scales linearly with the available copies.

The above description (i)--(iii) helps justify the viewpoint that SSML is not merely a convenient heuristic. Rather, it is a concrete model in which a one-bit-per-copy measurement record, a minimal memory variable, and a stopping-time certificate conspire to achieve the natural quantum accuracy limit without collective measurements or computationally heavy estimators.

%---------------------------------------------------------------------------------------------------------------------------------------------------------------------------------
\subsection{Noise threshold as an information bottleneck}\label{subsec:info_noise_bottleneck}

The most striking qualitative phenomenon uncovered by the stopping-time formulation is the sharp noise-induced blow-up of the halting time (Sec.~\ref{sec:noise_analysis}). At first sight, one might expect classification noise to merely ``slow down'' learning by confusing the update rule. Our analysis shows something stronger and more universal: the classification noise can render the very notion of completion operationally unattainable.

The reason is information-theoretic and stems from the certification semantics. Under label noise with flip probability $q$, the observable success probability is capped by $\widetilde{F}_{\max}=1-q$, even if the underlying physical control can achieve $F(\mathbf{p})=1$. Consequently, the probability of observing $M_H$ recorded successes in a row is at most $(1-q)^{M_H}$, and the waiting time is at least of order $(1-q)^{-M_H}$. This yields the universal lower bound $\mathbb{E}[T]\ \gtrsim\ \frac{1}{q}\,e^{qM_H}~(qM_H \gg 1)$ ({\bf Theorem~\ref{thm:noise_blowup}}), together with analogous exponential behavior for high-confidence halting times.

A particularly transparent way to read the threshold is the following. During the final certificate segment, which consists of $M_H$ recorded ``success'' labels, the expected number of corrupted labels is $qM_H$. Thus,
\begin{widetext}
\begin{eqnarray}
qM_H \lesssim 1 \quad\Longleftrightarrow\quad \text{``one expects fewer than one corruption per certificate.''}
\label{eq:info_qMH_interpretation}
\end{eqnarray}
\end{widetext}
When $qM_H \gtrsim 1$, clean success runs become exponentially rare, and halting becomes exponentially unlikely within any feasible shot budget. This is an operational threshold: it does not mean the system is unlearnable in principle, but that the run-length certificate ceases to be a practical completion criterion under the given feedback reliability.

\begin{remark}[Connection to experimentally observed SNR ceilings]\label{rem:info_SNR_ceiling}
In single-shot implementations, the false-negative events (dark counts, finite extinction ratios, etc.) directly interfere with long runs of successes. This leads to an empirical ceiling on measurable run lengths of order $\mathrm{SNR}\sim 1/q$, beyond which $M_S$ cannot reliably grow before being reset by noise. Our exponential blow-up bound provides a stopping-time explanation of this observation: the long runs are suppressed at precisely the scale where $qM_H$ ceases to be small.
\end{remark}

Finally, we emphasize a conceptual asymmetry revealed by the analysis. In the noiseless setting, increasing $M_H$ strengthens the certificate and (through the search--certification decomposition) reduces infidelity in a controlled manner. Under the classification noise, increasing $M_H$ beyond $\Theta(1/q)$ no longer strengthens the certificate in an operationally meaningful way, because the cost of witnessing the corresponding run explodes. Thus, the noise imposes a certificate ceiling that is not a property of the quantum system alone, but of the full quantum--classical loop.

%---------------------------------------------------------------------------------------------------------------------------------------------------------------------------------
\subsection{Physical principles and broader implications}\label{subsec:info_physical_principles}

We now connect the stopping-time picture to basic physical principles. The point is not to invoke foundational constraints, but to show that the structural features identified in Sec.~\ref{sec:noiseless_bounds} and Sec.~\ref{sec:noise_analysis} admit a crisp physical interpretation once SSML is viewed as an adaptive information-acquisition loop with an intrinsic certificate.

\medskip
\noindent \textbf{No-cloning as a resource law.} A universal cloning machine would convert $N$ copies of an unknown state into $N'$ copies with fidelity exceeding what is permitted by quantum mechanics~\cite{Dieks1982,Buvzek1996,Werner1998}. Equivalently, no-cloning constrains how efficiently one can turn finite-copy access into a classical description that is accurate enough to reproduce (prepare) the state on demand.

SSML makes this conversion explicit. At completion, the learner outputs a classical controller state $\mathbf{p}_{\rm est}$, which can be used to reproduce an approximation of the unknown target as $\hat{U}(\mathbf{p}_{\rm est})^{\dagger}\ket{f}$. The stopping-time bounds then read naturally as a \emph{resource law}: to certify infidelity $\epsilon$ using a one-bit-per-copy record, SSML must accumulate on the order of $1/\epsilon$ copies, because (i) the expected run length at infidelity $\epsilon$ is $\Theta(1/\epsilon)$ [Eq.~(\ref{eq:geom_mean})], and (ii) the halting rule requires a macroscopic amount of such evidence ({\bf Theorem~\ref{thm:runs_mean_tail}}). If one could violate this law---e.g., guarantee infidelity $o(1/N)$ with only $N$ copies---then the resulting classical description would be accurate enough to synthesize additional high-fidelity copies at a rate that would contradict known cloning/estimation limits. From this perspective, the familiar $O(N^{-1})$ frontier for pure-state estimation is not merely an asymptotic Cram\'er--Rao statement; it is enforced operationally by the evidence budget implicit in sequential certification.

\medskip
\noindent \textbf{No-signaling and causal adaptivity.} SSML is a strictly causal, copy-by-copy protocol: the choice of control $\mathbf{p}^{(n)}$ is measurable with respect to the past filtration $\mathcal{F}_{n-1}$, and the protocol never requires quantum memory across copies or collective measurements. This architecture respects no-signaling in the strongest operational sense~\cite{NielsenChuang2000,Dieks1982}: there is no step at which a spacelike separated degree of freedom could be influenced by future measurement choices, and all adaptivity is implemented by classical feedforward.

The stopping-time formulation makes this causality explicit. First, the halting time $T$ is a stopping time adapted to $\{\mathcal{F}_{n}\}$, which precisely formalizes the idea that ``the decision to stop uses only past data.'' Second, the decomposition theorem ({\bf Theorem~\ref{thm:decomposition_bound}}) shows that once a lower bound on the instantaneous success probability is achieved, the remaining completion time is controlled by a universal Bernoulli run process. This universality is itself a consequence of causality: the protocol cannot access future outcomes to ``force'' a long run, and therefore must pay the genuine statistical cost of observing it. In this way, SSML provides a clean example in which causal implementability is not a background assumption but a mathematical ingredient that shapes resource scaling.

\medskip
\noindent \textbf{Implication: SSML as a canonical benchmark model.} Taken together, the search--certification decomposition, the calibration to the $1/N$ information limit, and the noise-feasibility threshold suggest that SSML is more than a specialized algorithm for a single estimation task. It is a canonical benchmark model for studying adaptive information acquisition under simultaneous constraints:
\begin{itemize}
\item[(i)] {\em Quantum constraints:} each copy yields only Born-rule-limited information, and no-cloning prevents one from amplifying this information into arbitrarily many faithful replicas.
\item[(ii)] {\em Architectural constraints:} SSML operates with one-bit feedback and minimal memory, making it a realistic abstraction of hardware-limited closed-loop control.
\item[(iii)] {\em Statistical constraints:} completion is defined by a sequential certificate, so sample usage is governed by stopping-time laws rather than fixed-budget accuracy formulas.
\end{itemize}

In this benchmark view, the most striking universality is the certification bottleneck under classification noise: even an idealized search cannot evade the exponential rarity of strict run-length events once $qM_H \gtrsim 1$ ({\bf Theorem~\ref{thm:noise_blowup}}). This isolates, in a mathematically clean way, how an imperfect classical record can become the dominant information-theoretic bottleneck of an otherwise quantum-limited learning loop.

Finally, the benchmark perspective also clarifies what it means to ``improve'' SSML. Many variants in the literature primarily engineer the search stage (e.g., tailoring the update distribution to avoid vanishing improvement probability in large parameter spaces~\cite{Spall1992}). Our analysis highlights a complementary axis: redesigning the certificate (the halting rule). Alternative stopping rules, e.g., likelihood-based sequential tests~\cite{Siegmund2013}, can preserve the minimal-feedback spirit while decoupling certificate strength from the fragility of strict run statistics. From the stopping-time standpoint, such variants correspond to changing the stopping rule and therefore changing the stopping-time law, providing a principled route to noise-tolerant adaptive learning.

%-------------------------------------------------------------------------------------------------------------------------------------------------------------------------------------------------------------------------------------
\section{Discussion}\label{sec:discussion}
%-------------------------------------------------------------------------------------------------------------------------------------------------------------------------------------------------------------------------------------

We developed a stopping-time sample-complexity theory for single-shot measurement learning (SSML), motivated by a simple but often underemphasized observation: SSML does not naturally fit the standard fixed-budget paradigm of quantum state estimation. The protocol defines completion through a run-length halting rule, so the natural resource is the random number of consumed copies until halting. By elevating the halting rule to an intrinsic sequential certificate, we reframed SSML as a minimal model of adaptive information acquisition under quantum and architectural constraints and obtained scaling laws that are both operational and conceptually interpretable.

%---------------------------------------------------------------------------------------------------------------------------------------------------------------------------------
\subsection{Summary of contributions}\label{subsec:conclusion_summary}

Our results can be summarized in three interconnected statements.

\medskip
\noindent (i) {\em Halting is certification, not bookkeeping.}---Because SSML freezes the control on each success ({\bf Remark~\ref{rem:ssml_freezing}}), the terminal run of $M_H$ consecutive successes is generated under a fixed control $\mathbf{p}_{\rm est}$. This makes the halting event an intrinsic sequential certificate of high success probability and hence of low infidelity ({\bf Lemma~\ref{lem:certificate_strength}}).

\medskip
\noindent (ii) {\em The stopping time decomposes into search and certification.}---The stopping-time formulation isolates two qualitatively distinct costs: (a) {\bf search}, the landscape- and dimension-dependent effort required to bring the success probability close to unity, and (b) {\bf certification}, the dimension-independent overhead of observing $M_H$ consecutive successes once this regime
is reached. This separation is made explicit by the universal decomposition bound ({\bf Theorem~\ref{thm:decomposition_bound}}) and by the run-statistics law ({\bf Theorem~\ref{thm:runs_mean_tail}}).

\medskip
\noindent (iii) {\em Classification noise induces an operational threshold.}---Under label flips with probability $q$, the observable success probability is capped away from one even for ideal control. Because completion is defined by a strict run-length event, the expected stopping time must exhibit an exponential blow-up once $qM_H$ is not small ({\bf Theorem~\ref{thm:noise_blowup}}). This yields a sharp operational feasibility criterion: beyond the regime $qM_H\lesssim 1$, ``learning completion'' is effectively unattainable within any realistic shot budget.

Taken together, these statements support the central thesis of this work: SSML is not merely a convenient estimation algorithm, but a physically grounded model in which sequential certification, adaptive search, and information constraints are inseparable and can be analyzed in a unified stopping-time framework.

%---------------------------------------------------------------------------------------------------------------------------------------------------------------------------------
\subsection{Operational lessons and protocol-design implications}\label{subsec:conclusion_design}

The stopping-time bounds translate into direct guidance for using and modifying SSML in practice. In the noiseless setting, increasing $M_H$ strengthens the intrinsic certificate: the probability that a fixed
low-fidelity control produces $M_H$ consecutive successes decays exponentially in $M_H$ ({\bf Lemma~\ref{lem:certificate_strength}}), yielding the certificate scale $\epsilon_{\rm cert}(M_H,\delta)\approx \ln(1/\delta)/M_H$ in Eq.~(\ref{eq:eps_cert}). However, $M_H$ also determines the certification overhead through run statistics, and therefore must be chosen with feasibility in mind. {\bf Remark~\ref{rem:linear_vs_exp_cert}} identifies the relevant regime: certification remains efficient only when the underlying infidelity is already within $O(1/M_H)$ of zero.

Under the classification noise, this trade-off becomes sharper. The observable success probability saturates below unity, which makes long success runs exponentially unlikely. As a consequence, one cannot arbitrarily ``push'' SSML to higher apparent confidence by increasing $M_H$: once $qM_H\gtrsim 1$ the expected halting time grows essentially as $\exp(qM_H)$ ({\bf Theorem~\ref{thm:noise_blowup}}). A pragmatic design rule is therefore to keep the halting threshold below the noise-limited run scale,
\begin{eqnarray}
M_H \lesssim \frac{c}{q},
\label{eq:concl_MH_rule}
\end{eqnarray}
for a modest constant $c=O(1)$ determined by the acceptable overhead.

Our analysis also suggests a principled direction for protocol engineering in noisy regimes. The bottleneck is often not the search dynamics itself but the fragility of the certificate---a strict run-length statistic on a noisy one-bit record~\cite{Cover1999,Drekic2021}. This motivates a ``hybrid'' design philosophy: retain the minimal-memory, one-bit SSML search loop, but replace the halting rule by a more noise-tolerant sequential test~\cite{Siegmund2013}. In the stopping-time language, such modifications correspond to changing the stopping rule, and hence, changing the stopping-time law; the analysis developed here applies directly to these variants.

Finally, the decomposition highlights where dimension dependence lives. While certification is governed by universal run statistics, the search stage can deteriorate in high-dimensional parameterizations because random perturbations become less likely to align with improving directions. This explains why practical high-dimensional SSML variants engineer the update distribution to avoid vanishing improvement probability, effectively compressing the search into lower-dimensional subspaces.

%---------------------------------------------------------------------------------------------------------------------------------------------------------------------------------
\subsection{Physical and information-theoretic meaning}\label{subsec:conclusion_info}

Beyond algorithm design, the stopping-time viewpoint provides a narrative that connects SSML to foundational constraints. First, SSML converts finite-copy access into a classical controller state that can be used to reproduce an approximation of the unknown quantum state. The resulting $1/\epsilon$ resource law for certification (Sec.~\ref{subsec:noiseless_certification}) is consistent with the information limits of pure-state estimation and, equivalently, with no-cloning constraints argued in the original SSML proposal~\cite{Lee2018,Massar1995,WoottersZurek1982}. Our contribution is to show how this limit emerges operationally from the sequential evidence accumulation: the protocol cannot certify infidelity $\epsilon$ without spending $\Theta(1/\epsilon)$ copies because the certificate statistic itself requires that amount of data.

Second, the noise threshold illustrates how physical imperfections translate into information-theoretic bottlenecks. The classification noise does not merely introduce a small bias; it limits the reliability of the one-bit feedback channel on which SSML relies to both update and certify completion. The condition $qM_H \gtrsim 1$ can be read as the statement that during the would-be certificate run one expects $O(1)$ corrupted labels, rendering the strict run-length event exponentially rare. In this sense, the blow-up is certification-limited and landscape independent: it persists even if the search stage is replaced by an ideal oracle.

Finally, SSML is inherently causal and implementable: each decision is based only on past information and each copy is measured locally and then discarded. The stopping-time formulation makes this causality explicit through the filtration-adapted definition of $T$ and through coupling arguments that compare SSML to idealized Bernoulli run processes. This provides a concrete example in which causal implementability is not only physically required but mathematically shapes resource scaling~\cite{Durrett2019,Shaked2007}.

%---------------------------------------------------------------------------------------------------------------------------------------------------------------------------------
\subsection{Limitations and future directions}\label{subsec:conclusion_future}

Several extensions follow naturally from the present analysis. On the noise side, real experiments exhibit structured imperfections, such as correlated mislabeling, drift, and systematic miscalibration (see Refs.~\cite{Lee2021}). Replacing the i.i.d. flip model by Markovian or bursty noise models, it would be interesting to determine whether the threshold condition $qM_H$ is replaced by an effective ``noise load'' accumulated over the certificate window. On the modeling side, SSML is fundamentally aligned with pure-state learning via unitary inversion; mixed targets or model mismatch can prevent halting or induce a noise-limited ceiling. A stopping-time theory for mixed-state learning likely requires a modified objective or a certificate statistic that does not rely on approaching unit fidelity.

A particularly important direction is to classify alternative halting rules as sequential tests. From our perspective, the halting condition is the central information-theoretic object in SSML. Different stopping rules (run-length, windowed rates, likelihood ratios, verification bursts) correspond to different stopping-time laws and therefore to different robustness and sample-complexity trade-offs. Developing a general ``certificate complexity'' theory for minimal-memory adaptive quantum learning remains an interesting interface between quantum estimation and sequential analysis~\cite{Siegmund2013}.

Finally, our results emphasize scaling laws and universal mechanisms. Sharper constants and genuinely tight bounds will require a more detailed description of the control landscape and of the update distribution, especially in high-dimensional parametrizations. Such refinements could connect SSML more directly to modern analyses of stochastic optimization~\cite{Spall2002}, bandit feedback~\cite{Bubeck2012}, and reinforcement learning under partial observability~\cite{Jae2025}.

%---------------------------------------------------------------------------------------------------------------------------------------------------------------------------------
\subsection{Conclusion}\label{subsec:conclusion_final}

In conclusion, SSML admits a natural reformulation as a stopping-time learning problem in which halting is an intrinsic sequential certificate rather than an external stopping criterion. This viewpoint yields a clean separation of sample costs into search and certification and exposes a universal trade-off between target accuracy (set by $M_H$), Hilbert-space dimension $d$, and the reliability of the one-bit feedback record. Most notably, classification noise induces an exponential blow-up of the halting time once $qM_H \gtrsim 1$, constituting a sharp operational threshold that is invisible to fixed-budget analyses. Beyond quantifying performance, these results support the broader claim that SSML is a conceptually important model for adaptive information acquisition in quantum systems, where physical principles and information constraints can be studied through an experimentally implementable minimal-feedback architecture.

%=========================================================================================================================================
\section*{Acknowledgement}
This work was supported by the Ministry of Science, ICT and Future Planning (MSIP) by the National Research Foundation of Korea (RS-2024-00432214, RS-2025-03532992, and RS-2025-18362970) and the Institute of Information and Communications Technology Planning and Evaluation grant funded by the Korean government (RS-2019-II190003, ``Research and Development of Core Technologies for Programming, Running, Implementing and Validating of Fault-Tolerant Quantum Computing System''), the Korean ARPA-H Project through the Korea Health Industry Development Institute (KHIDI), funded by the Ministry of Health \& Welfare, Republic of Korea (RS-2025-25456722), and the Ministry of Trade, Industry and Resources (MOTIR), Korea, under the project ``Industrial Technology Infrastructure Program'' (RS-2024-00466693). We acknowledge the Yonsei University Quantum Computing Project Group for providing support and access to the Quantum System One (Eagle Processor), which is operated at Yonsei University.

\end{document}